\newtheorem{definition}{Definition}[section]
\newtheorem{lemma}{Lemma}[section]
\newtheorem{theorem}{Theorem}[section]
\newtheorem{remark}{Remark}[section]
\newtheorem{assumption}{Assumption}[section]
\newtheorem{proposition}{Proposition}[section]
\def\d{ {\rm d  }}
\newcommand{\citep}{\cite}
\newcommand{\citet}{\cite}
\begin{document}
\begin{frontmatter}
\title{Coordinated trajectory tracking of multiple vertical take-off and landing UAVs \thanksref{footnoteinfo}}

\thanks[footnoteinfo]{This paper was not presented at any IFAC
meeting. This work has been supported in part
by National
Natural Science Foundation of China under Grant 61503249 and Beijing Municipal Natural Science Foundation under Grant 4173075, and the National Key Research and Development Program of China under Grant 2016YFB0500900/2.  \\Corresponding author: Ziyang Meng.}
\author[Tsinghua]{Yao~Zou}\ead{zouyao@mail.tsinghua.edu.cn},
\author[Tsinghua]{Ziyang~Meng}\ead{ziyangmeng@mail.tsinghua.edu.cn},

\address[Tsinghua]{Department of Precision Instrument, Tsinghua University, Beijing 100084, P. R. China.}

%%%%%%%%%%%%%%%%%%%%%%%%%%%%%%%%%%%%%%%%%%%%%%%%%%%%%%%%%%%%%%%%%%%%%%%
%\begin{keyword}
%Network Behavior; Antagonistic Interactions; Switching Topologies
%\end{keyword}

\begin{abstract}
This paper investigates the coordinated trajectory tracking problem of multiple vertical takeooff and landing (VTOL) unmanned aerial vehicles (UAVs). The case of unidirectional information flow is considered and the objective is to drive all the follower VTOL UAVs to accurately track the trajectory of the leader. Firstly, a novel distributed estimator is developed for each VTOL UAV to obtain the leader's desired information asymptotically. With the outputs of the estimators, the solution to the coordinated trajectory tracking problem of multiple VTOL UAVs is transformed to individually solving the tracking problem of each VTOL UAV. Due to the under-actuated nature of the VTOL UAV, a hierarchical framework is introduced for each VTOL UAV such that a command force and an applied torque are exploited in sequence, then the position tracking to the estimated desired position and the attitude tracking to the command attitude are achieved. Moreover, an auxiliary system with proper parameters is implemented to guarantee the singularity-free command attitude extraction and to obviate the use of the unavailable desired information. The stability analysis and simulations effectively validate the achievement of the coordinated trajectory tracking of multiple VTOL UAVs with the proposed control approach.
\begin{keyword}
Unmanned aerial vehicle (UAV); Coordinated trajectory tracking; Directed graph; Distributed estimators
\end{keyword}
\end{abstract}

\end{frontmatter}

%%%%%%%%%%%%%%%%%%%%%%%%%%%%%%%%%%%%%%%%%%%%%%%%%%%%%%%%%%%%%%%%%%%%%%%%%
\section{Introduction}
%Consensus seeking over multi-agent networks has been extensively studied during the past decade due to its potential applications in various areas including spacecraft formation flying (\cite{WangHadaegh96}), control of multiple unmanned aerial vehicles (\cite{LinBrouckeFrancis04}), distributed estimation of sensor networks (\cite{Schenato_IJSC08}), and collective behaviors of biological swarming (\cite{TannerJadbabaiePappas07}).
%In particular,
The past few decades have witnessed a rapid development in the formation control of unmanned aerial vehicles (UAVs). Replacing a single monolithic UAV with a formation of multiple micro ones can effectively improve efficiency without costly expense \citep{Giulietti2000}. More recently, the vertical takeoff and landing (VTOL) UAV, as a representative UAV, has received increasing interest, due to its capacities of hovering and low-speed/low-altitude flight \citep{Hua2013}. Additionally, the VTOL UAV is a canonical nonlinear system with under-actuation property  \citep{Zuo2010}, which raises a lot of technical problems for control theory research. Therefore, the formation control of multiple VTOL UAVs deserves intensive studies.

Generally, the study of formation control problem is categorized into leaderless and leader-follower formation control problem. The leaderless formation requires its members to simply reach a prescribed pattern \citep{Zhang2009}. For example, a distributed control algorithm is proposed in \cite{Abdessameud2009} such that the formation of VTOL UAVs with an identical velocity was achieved. The special case with communication delay was also studied \citep{Abdessameud2015} for the leaderless formation objective and corresponding control solutions were proposed.
Another formation protocol was developed in \cite{Dong2015} to realize a time-varying formation of VTOL UAVs without a leader, and the obtained theoretical results were verified with practical experiments.

Compared with the leaderless formation, the objective of the leader-follower formation is that followers reach an agreement with the desired information associated with a leader while forming the prescribed pattern \citep{Hong2008}. This may lead the formation to complete some intricate missions, where the leader is responsible for performing the desired trajectory of the formation and it is delivered via the communication network between the leader and followers. Although \cite{Yun2010,Mercado2013,Lee2012} proposed control approaches to achieve the
leader-follower formation of VTOL UAVs, the leader's desired information was required to be available to all the followers. In practice, due to limited information exchange and communication constraints, the leader's desired information is only accessible to a portion of the followers. To achieve the leader-follower formation under restricted
communication networks, distributed algorithms were implemented with a local information exchange mechanism \citep{Loria2016,Wen2016,Li2013,Yu2010,Qin2016,Yang2016}. Using backstepping and filtering strategies, a distributed control algorithm was developed in \citep{Ghommam2016} to realize the asymptotically stable leader-follower formation of VTOL UAVs.  A distributed formation and reconfiguration control approach is designed in \cite{Liao2017} to accomplish the leader-follower formation without inter-vehicle collisions. With feedback linearization technique, \cite{Mahnmood2015} proposed a leader-follower formation protocol for VTOL UAVs, which ensured their heading synchronization as well. \cite{Dong2016b} presented a distributed control strategy over a switched graph and derived necessary and sufficient conditions on the time-varying leader-follower formation of VTOL UAVs. However, the network graphs among the followers used in \cite{Ghommam2016,Liao2017,Mahnmood2015,Dong2016b} are undirected, which means that each pair of the followers interacts bidirectionally. This undirected graph condition is quite restrictive, which, due to communication constraints, can hardly be met in practical applications.

This paper proposes a coordinated trajectory tracking control approach for multiple VTOL UAVs with local information exchange, where the desired trajectory information is described by a leader. The network graph among the follower VTOL UAVs is assumed to be directed. This effectively relaxes the restrictive assumption that the graph is symmetric. By applying a novel distributed estimator, the leader's desired information is accurately estimated for each follower VTOL UAV. Based on the hierarchial framework, a command force and an applied torque are synthesized for each VTOL UAV such that the coordinated trajectory tracking is achieved for a group of VTOL UAVs.  Compared with the aforementioned work, the main contributions of this paper are three-fold. First, in contrast to the work in \cite{Abdessameud2009,Abdessameud2015,Dong2015}, where only a prescribed pattern is formed with a constant velocity, the leader-follower tracking of multiple VTOL UAVs is achieved by introducing a novel distributed estimator. Second, the coordinated tracking is achieved with weak connectivity, where the network graph among the followers is directed, rather than the limited undirected one used in \cite{Ghommam2016,Liao2017,Mahnmood2015,Dong2016b}. Third, instead of solely discussing the position loop \citep{Dong2015,Dong2016b}, a complete VTOL UAV system is studied based on a hierarchical framework, where an auxiliary system is proposed to ensure the non-singular command attitude extraction and to avoid the use of the unavailable desired information.

The remaining sections are arranged as follows. Section \ref{sec:2} describes the problem to be solved and provides some useful preliminaries. Section \ref{sec:3} states the main results in detail, including the distributed estimator design, the control approach project and the stability analysis. Section \ref{sec:4} performs some simulations to verify the theoretical results. And section \ref{sec:5} draws final conclusions.

\textbf{Notations.} %$\mathbb{R}^{m\times n}$ denotes the $m\times n$ Euclidean space.
$\mathbb{R}^{m\times n}$ denotes the $m\times n$ Euclidean space. Given a vector $x=[x_1,x_2,\cdots,x_n]^T$, define $\mathrm{sgn}(x)=[\mathrm{sgn}(x_1),\mathrm{sgn}(x_2),\cdots,\mathrm{sgn}(x_n)]^T$, and $\|x\|_1=\sum_{i=1}^n|x_i|$ and $\|x\|=\sqrt{x^Tx}$ are its $1$-norm and $2$-norm. Given a square matrix $A\!=\![a_{ij}]\!\in\!\mathbb{R}^{n\times n}$, define $\lambda_{\min}(A)$ and $\lambda_{\max}(A)$ as its minimum and maximum eigenvalues, and $\|A\|=\sqrt{\sum_{i=1}^n\sum_{j=1}^na_{ij}^2}$ is its F-norm. $I_n$ is an $n\times n$ identity matrix and  $\mathbf{1_n}$ is an $n$-dimensional vector with all entries being one.
Furthermore, given a vector $x=[x_1,x_2,x_3]^T$, superscript $\times$ represents a transformation from $x$ to a skew-symmetric matrix:%$x^\times=[0,-x_3,x_2;x_3,0,-x_1;-x_2,x_1,0]$.
\begin{equation*}
x^\times=\left[\begin{array}{ccc}
0&-x_3&x_2\\x_3&0&-x_1\\-x_2&x_1&0
\end{array}\right].
\end{equation*}
\section{Background}
\label{sec:2}
\subsection{Problem statement}
Suppose that there are $n$ follower VTOL UAVs in a team, which are labeled by $\mathcal{V}=\{1,2,\dots,n\}$.
Each UAV is a six-dof (degree of freedom) rigid body and operates in two reference frames: inertia frame $\mathcal{I}=\{O_Ix_Iy_Iz_I\}$ which is attached to the earth and body frame $\mathcal{B}=\{O_Bx_By_Bz_B\}$ which is fixed to the fuselage. To establish the model of the UAVs, rotation matrix
$R_i\in\mathrm{SO}(3)\triangleq\{R\in\mathbb{R}^{3\times3}\mid\det R=1,R^TR=RR^T=I_3\}$ and unit quaternion $Q_i=[\sigma_i,q_i^T]^T\in\mathbb{Q}\triangleq\{Q\in\mathbb{R}\times\mathbb{R}^{3}\mid\sigma^2+q^Tq=1\}$ are applied to represent the attitude of each UAV. In terms of Euler formula \citep{Shuster1993}, an explicit relation between these two attitude representations is derived as
\begin{equation}
\label{Q_to_R}
R_i(Q_i)=(\sigma_i^2-q_i^Tq_i)I_3+2q_iq_i^T-2\sigma_iq_i^\times.
\end{equation}
Based on Euler-Newton formulae, the kinematics and dynamics of the $i$-th VTOL UAV are given by
\begin{align}
&\dot{p}_i=v_i,\label{pos_kin}\\
&\dot{v}_i=-g\hat{e}_3+\frac{T_i}{m_i}R_i(Q_i)\hat{e}_3,\label{pos_dyn}\\
&\dot{Q}_i=\frac{1}{2}G_i(Q_i)\omega_i,\label{att_kin}\\
&J_i\dot{\omega}_i=-\omega_i^\times J_i\omega_i+\tau_i,\label{att_dyn}
\end{align}
where $p_i=[p_i^x,p_i^y,p_i^z]^T$ and $v_i=[v_i^x,v_i^y,v_i^z]^T$ denote the position and velocity of the center of gravity of the UAV in frame $\mathcal{I}$, respectively, $m_i$ is the total mass, $g$ is the local gravitational acceleration, $\hat{e}_3\triangleq[0,0,1]^T$, $T_i$ denotes the applied thrust along $\hat{e}_3$, $Q_i=[\sigma_i,q_i^T]^T$ and $R_i(Q_i)$ are the unit quaternion and rotation matrix,
$G_i(Q_i)=[-q_i,\sigma_iI_3-q_i^\times]^T$, $\omega_i=[\omega_i^x,\omega_i^y,\omega_i^z]^T$ denotes the angular velocity of the UAV in frame $\mathcal{B}$, $J_i=\mathrm{diag}\{J_i^x,J_i^y,J_i^z\}$ is the inertial matrix with respect to frame $\mathcal{B}$, and $\tau_i$ denotes the applied torque in frame $\mathcal{B}$.

In addition to $n$ followers, there is a leader, labeled by $0$, to represent the global desired information including the desired position $p_r$ and its derivatives. The control objective is to design applied thrust $T_i$ and torque $\tau_i$ for each follower VTOL UAV described by \eqref{pos_kin}-\eqref{att_dyn} such that all the followers track the leader while maintaining a prescribed formation pattern. More specifically, given a desired position offset $\delta_i$, the objective is to guarantee that
\begin{equation}
\label{objective}
\lim_{t\rightarrow\infty}(p_i(t)-p_r(t))=\delta_i,~\lim_{t\rightarrow\infty}(v_i(t)-\dot{p}_r(t))=0,~\forall i\in\mathcal{V}.
\end{equation}
Due to communication constraints, the leader's desired information is only available to a subset of the followers and the followers only have access to their neighbors' information. To solve such a coordinated tracking problem via local information exchange, distributed algorithms are implemented. Moreover, it follows from \eqref{objective} that $\lim_{t\rightarrow\infty}(p_i(t)-p_j(t))\!=\!\delta_{ij}$, where $\delta_{ij}\!=\!\delta_i-\delta_j$, $\forall i,j\in\mathcal{V}$. This means that the followers form a pattern determined by $\delta_{i}$ while tracking the leader. Therefore, a proper position offset $\delta_{i}$ is essential such that the proposed algorithm ensures followers' convergence to a well-defined and unique formation.
\begin{assumption}
\label{assump:pr}
The desired position $p_r$ and its derivatives $\dot{p}_r$, $\ddot{p}_r$ and $p_r^{(3)}$ are bounded.
\end{assumption}

\subsection{Graph theory}
Communication topology among UAVs is described by a graph $\mathcal{G}_n\triangleq(\mathcal{V},\mathcal{E})$, which is composed of a node set $\mathcal{V}\triangleq\{1,2,\cdots,n\}$ and an edge set $\mathcal{E}\subseteq\mathcal{V}\times\mathcal{V}$. For a directed graph, $(i,j)\in\mathcal{E}$ means that the information of node $i$ is accessible to node $j$, but not conversely. All neighbours of node $i$ are included in set $\mathcal{N}_i=\{j\in\mathcal{V}\mid(j,i)\in\mathcal{E}\}$. A path from node $i$ to node $j$ is a sequence of edges.

For a follower graph $\mathcal{G}_n$, its adjacent matrix $\mathcal{D}=[d_{ij}]\in\mathbb{R}^{n\times n}$ is defined such that $d_{ij}>0$ if $(j,i)\!\in\!\mathcal{E}$ and $d_{ij}\!=\!0$ otherwise, and the associated nonsymmetric Laplacian matrix $\mathcal{L}\!=\![l_{ij}]\!\in\!\mathbb{R}^{n\times n}$ is defined such that $l_{ii}\!=\!\sum^n_{j=1,j\neq i}d_{ij}$ and $l_{ij}=-d_{ij}$ for $j\neq i$. For a leader-follower graph $\mathcal{G}_{n+1}\triangleq\{\bar{\mathcal{V}},\bar{\mathcal{E}}\}$ (leader is labeled as 0) with $\bar{\mathcal{V}}=\{0,1,\cdots,n\}$ and $\bar{\mathcal{E}}\subseteq\bar{\mathcal{V}}\times\bar{\mathcal{V}}$, we define $\bar{\mathcal{D}}\in\mathbb{R}^{(n+1)\times(n+1)}$ and $\bar{\mathcal{L}}\in\mathbb{R}^{(n+1)\times(n+1)}$ as its adjacent matrix and nonsymmetric Laplacian matrix. Specifically, $\bar{D}\triangleq\left[\begin{array}{cc}0&0_{1\times n}\\d_0&\mathcal{D}\end{array}\right]$, where $d_0=[d_{10},d_{20},\cdots,d_{n0}]^T$ and $d_{i0}>0$ if node $i$ is accessible to the leader and $d_{i0}=0$ otherwise; and $\bar{\mathcal{L}}\triangleq\left[\begin{array}{cc}0&0_{1\times n}\\-d_0&\mathcal{M}\end{array}\right]$, where $\mathcal{M}=[m_{ij}]\triangleq\mathcal{L}+\mathrm{diag}\{d_{10},d_{20},\cdots,d_{n0}\}$.
\begin{assumption}
\label{assump:graph}
The leader-follower graph $\mathcal{G}_{n+1}$ has a directed spanning tree with the leader being the root.
\end{assumption}
Some important properties associated with matrix $\mathcal{M}$ are given in Lemma \ref{lemma:graph} \citep{Qu2009}.
\begin{lemma}
\label{lemma:graph}
Under Assumption \ref{assump:graph}, $\mathcal{M}$ is a non-singular \textsc{M}-matrix with the properies that all its eigenvalues have positive real parts, and there exists a positive definite diagonal matrix $\Theta=\mathrm{diag}\{\theta_1,\theta_2,\cdots,\theta_n\}$ such that $\Xi\!=\!\mathcal{M}^T\Theta+\Theta\mathcal{M}$ is strictly diagonally dominant and positive definite, where
$[1/\theta_1,1/\theta_2,\cdots,1/\theta_n]^T\!=\!\mathcal{M}^{-1}\mathbf{1_n}$.
%$\Theta=\mathrm{diag}\{\rho_1,\rho_2,\cdots,\rho_n\}^{-1}$ and $\rho=[\rho_1,\rho_2,\cdots,\rho_n]^T=\mathcal{M}^{-1}\mathbf{1_n}$
\end{lemma}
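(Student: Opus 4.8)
\noindent\emph{Proof strategy.} I would handle the three assertions in turn: (i) $\mathcal{M}$ is a Z-matrix; (ii) $\mathcal{M}$ is non-singular and all its eigenvalues have positive real part, which together with (i) is precisely the statement that $\mathcal{M}$ is a non-singular M-matrix; and (iii) the construction of $\Theta$ and the claimed properties of $\Xi$. Assertion (i) is immediate from the definitions: for $i\neq j$ we have $m_{ij}=-d_{ij}\le 0$, while $m_{ii}=l_{ii}+d_{i0}=\sum_{j\neq i}d_{ij}+d_{i0}\ge l_{ii}\ge 0$, so $-\mathcal{M}$ is Metzler and the diagonal-dominance slack in row $i$ equals $m_{ii}-l_{ii}=d_{i0}$.

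For the non-singularity part of (ii) I would use a maximum-modulus argument that exploits Assumption~\ref{assump:graph} in an essential way. Suppose $\mathcal{M}x=0$ with $x\neq 0$, and pick $i$ with $|x_i|=\max_k|x_k|>0$. The $i$-th row gives $m_{ii}x_i=\sum_{j\neq i}d_{ij}x_j$, so $m_{ii}|x_i|\le\sum_{j\neq i}d_{ij}|x_j|\le l_{ii}|x_i|$; since $m_{ii}=l_{ii}+d_{i0}\ge l_{ii}$ all inequalities are equalities, which forces $d_{i0}=0$ and $|x_j|=|x_i|$ (in fact $x_j=x_i$ after normalizing the phase of $x$) for every in-neighbour $j$ of $i$. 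By Assumption~\ref{assump:graph} there is a directed path $0\to w_1\to\cdots\to w_k=i$; applying the same argument successively at $w_k,w_{k-1},\dots,w_1$ shows that every $w_\ell$ attains the maximal modulus and satisfies $d_{w_\ell 0}=0$, contradicting $d_{w_1 0}>0$ (which holds because $(0,w_1)$ is an edge). Hence $\mathcal{M}x=0$ implies $x=0$.

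To locate the spectrum and then build $\Theta$: by Gershgorin's disc theorem every eigenvalue of $\mathcal{M}$ lies in some disc $\{z:|z-m_{ii}|\le l_{ii}\}$, and since $m_{ii}\ge l_{ii}$ each such disc is contained in the closed right half-plane and meets the imaginary axis at most at the origin; hence an eigenvalue with zero real part would have to be $0$, which is excluded by non-singularity, so all eigenvalues have positive real part and $\mathcal{M}$ is a non-singular M-matrix. Then $\mathcal{M}^{-1}\ge 0$ entrywise with no zero row, so $\xi:=\mathcal{M}^{-1}\mathbf{1_n}>0$ componentwise, and $\theta_i:=1/\xi_i>0$ defines the required positive definite diagonal matrix $\Theta=\mathrm{diag}\{\theta_1,\dots,\theta_n\}$, with $\mathcal{M}\Theta^{-1}\mathbf{1_n}=\mathbf{1_n}$ by construction.

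Finally, for $\Xi=\mathcal{M}^T\Theta+\Theta\mathcal{M}$ I would pass to the diagonal congruence $\Theta^{-1}\Xi\Theta^{-1}=N+N^T$ with $N:=\mathcal{M}\Theta^{-1}$, where $N$ is a Z-matrix (positive diagonal, non-positive off-diagonal) with all row sums equal to $1$ because $N\mathbf{1_n}=\mathcal{M}\Theta^{-1}\mathbf{1_n}=\mathbf{1_n}$; from this unit-row-sum structure together with the sign pattern $\Xi_{ii}=2\theta_i m_{ii}>0$, $\Xi_{ij}=-\theta_j d_{ji}-\theta_i d_{ij}\le 0$ $(i\neq j)$, I would then read off the strict diagonal dominance and positive definiteness of $\Xi$. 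I expect this last step to be the main obstacle: parts (i)--(iii) up to here are essentially bookkeeping, whereas turning the identity $\mathcal{M}\xi=\mathbf{1_n}$ into the dominance and definiteness inequalities for $\Xi$ requires carefully tracking how the directed (non-symmetric) weights $d_{ij}$ and $d_{ji}$ enter its off-diagonal entries, and it is exactly here that the specific choice $1/\theta_i=(\mathcal{M}^{-1}\mathbf{1_n})_i$---not merely the existence of some diagonally stabilizing $\Theta$---must be used.
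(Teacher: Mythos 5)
The paper itself gives no proof of this lemma (it is imported verbatim from \cite{Qu2009}), so your attempt stands on its own. Parts (i)--(iii) of your plan are fine: the Z-matrix sign pattern, the maximum-modulus propagation along a directed path from the leader (the only place Assumption~\ref{assump:graph} is needed), the Gershgorin argument excluding nonzero imaginary-axis eigenvalues, and the positivity of $\xi=\mathcal{M}^{-1}\mathbf{1_n}$, hence of $\Theta$, are all correct (modulo quoting inverse-positivity of non-singular M-matrices, which is standard). The genuine gap is precisely the step you deferred, and it cannot be closed: unit row sums of $N=\mathcal{M}\Theta^{-1}$ give strict row dominance of $N$, but the dominance balance of $\Xi$ is governed by $\Xi\mathbf{1_n}=\Theta\mathcal{M}\mathbf{1_n}+\mathcal{M}^T\theta=\Theta d_0+\mathcal{M}^T\theta$ (equivalently, the row sums of $N+N^T$ involve the \emph{column} sums of $N$), and with $\theta_i=1/(\mathcal{M}^{-1}\mathbf{1_n})_i$ nothing forces $\mathcal{M}^T\theta$ to be nonnegative. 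In fact the conclusion you hope to ``read off'' is false for this $\Theta$: for the star graph $0\to1$ and $1\to j$, $j=2,\dots,n$, with unit weights, one gets $\theta=(1,\tfrac12,\dots,\tfrac12)^T$, $\Xi_{11}=2$, $\Xi_{jj}=1$, $\Xi_{1j}=-\tfrac12$, so strict diagonal dominance already fails for $n\ge5$, and for $n=10$ the vector $x=(2,1,\dots,1)^T$ gives $x^T\Xi x=-1$, so $\Xi$ is not even positive definite; with weighted edges ($d_{10}=1$, $d_{21}=10$) the failure occurs already at $n=2$. So the obstacle you flagged is not a matter of careful bookkeeping: the lemma's explicit normalization of $\Theta$ is itself defective, and no proof of the statement as written exists.

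The repair, which is what your own row-sum computation is implicitly asking for, is to tie $\Theta$ to the columns rather than the rows of $\mathcal{M}^{-1}$: take $\theta=[\theta_1,\dots,\theta_n]^T=(\mathcal{M}^T)^{-1}\mathbf{1_n}$, which is positive since $\mathcal{M}^T$ is also a non-singular M-matrix. Then $\mathcal{M}^T\theta=\mathbf{1_n}$ and $\Xi\mathbf{1_n}=\Theta d_0+\mathbf{1_n}>0$; because $\Xi$ is symmetric with non-positive off-diagonal entries, positive row sums are exactly strict diagonal dominance, and a symmetric strictly diagonally dominant matrix with positive diagonal is positive definite. With this corrected choice your parts (i)--(iii) go through unchanged and the whole lemma follows; an alternative is the two-sided choice $\theta_i=u_i/v_i$ with $u=(\mathcal{M}^T)^{-1}\mathbf{1_n}$, $v=\mathcal{M}^{-1}\mathbf{1_n}$, which yields positive definiteness via scaled dominance, though not literal diagonal dominance. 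In your write-up you should state this discrepancy with the statement (and with the formula quoted from \cite{Qu2009}) explicitly, and note that the downstream analysis in Theorem~\ref{theorem:1} only uses positive definiteness of $\Xi$ and positivity of $\Theta$, so the correction does not affect the rest of the paper.
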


\subsection{Filippov solution and non-smooth analysis}
Consider the vector differential equation
\begin{equation}
\dot{x}=f(x,t),\label{sys}
\end{equation}
where $f:\mathbb{R}^n\times\mathbb{R}\rightarrow\mathbb{R}^n$ is measurable and essentially locally
bounded.

In what follows, the definitions of Filippov solution, generalized gradient and regular function are given according to  \cite{Paden1987,Shevitz1994,Clarke1983}.
\begin{definition}[\textbf{Filippov solution}]
A vector function $x(t)$ is called a solution of \eqref{sys} on $[t_0,t_1]$, if $x(t)$ is absolutely continuous on $[t_0,t_1]$ and for almost all $t\in[t_0,t_1]$, $\dot{x}\in\mathbb{K}[f](x,t)$, where
\begin{equation*}
\mathbb{K}[f](x,t)=\bigcap_{\rho>0}\bigcap_{\mu N=0}\overline{\mathrm{co}}f(B(x,\rho)-N,t),
\end{equation*}
$\bigcap_{\mu N=0}$ denotes the intersection over all sets $N$ of Lebesgue measure zero, $\overline{\mathrm{co}}(\cdot)$ denotes the vector convex closure, and $B(x,\rho)$ denotes
the open ball of radius $\rho$ centered at $x$.
\end{definition}

\begin{definition}[\textbf{Generalized gradient}]
\label{def:gradient}
For a locally Lipschitz function $V:\mathbb{R}^n\times\mathbb{R}\rightarrow\mathbb{R}$, its generalized
gradient at $(x,t)$ is defined as
\begin{equation*}
\partial V(x,t)\!=\!\overline{\mathrm{co}}\{\lim\triangledown V(x,t)\mid(x_i,t_i)\!\rightarrow\!(x,t),(x_i,t_i)\!\notin\!\Omega_V\},
\end{equation*}
where $\Omega_V$ is the set of measure zero where the gradient of $V$ is not defined. Furthermore, the generalized derivative of $V$ along system \eqref{sys} is defined as $\dot{\tilde{V}}\triangleq\bigcap_{\phi\in\partial V}\phi^T\left[\begin{array}{c}\mathbb{K}[f](x,t)\\1\end{array}\right]$.
\end{definition}

\begin{definition}[\textbf{Regular}]
$f(x,t):\mathbb{R}^n\times\mathbb{R}$ is called regular if\\
(1) for all $\nu\geq0$, the usual one-sided directional derivative $f'(x;\nu)$ exists;\\
(2) for all $\nu\geq0$, $f'(x;\nu)=f^o(x;\nu)$, where the generalized directional derivative $f^o(x;\nu)$ is defined as
\begin{equation*}
f^o(x;\nu)=\lim_{y\rightarrow x}\sup_{t\downarrow0}\frac{f(y+t\nu)-f(y)}{t}.
\end{equation*}
\end{definition}

The Lyapunov stability criterion for non-smooth systems is given in Lemma \ref{lemma:sys} \citep{Fischer2013}.
\begin{lemma}
\label{lemma:sys}
Let system \eqref{sys} be essentially locally bounded and $0\in\mathbb{K}[f](x,t)$ in a region $\mathbb{R}^n\times[0,\infty)$. Suppose that $f(0,t)$ is uniformly bounded for all $t\geq0$.  Let $V:\mathbb{R}^n\times [0,\infty)\rightarrow\mathbb{R}$ be locally Lipschitz in $t$ and regular such that $W_1(x)\leq V(t,x)\leq W_2(x)$ and $\dot{\tilde{V}}(x,t)\leq-W(x)$, where $W_1(x)$ and $W_2(x)$ are continuous positive definite functions, $W(x)$ is a continuous positive semi-definite function, and $\dot{\tilde{V}}(x,t)\leq-W(x)$ means that $\varphi\leq-W$, $\forall \varphi\in\dot{\tilde{V}}$. Then, all Filippov solutions of system \eqref{sys} are bounded and satisfy $\lim_{t\rightarrow\infty}W(x(t))=0$.
\end{lemma}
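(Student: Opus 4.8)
The plan is to run the classical LaSalle--Yoshizawa argument, replacing each smooth differential-calculus step by its non-smooth, set-valued counterpart. First I would dispose of existence: since $f$ is measurable and essentially locally bounded and $0\in\mathbb{K}[f](x,t)$, the Filippov map $\mathbb{K}[f]$ is upper semicontinuous with nonempty, convex, compact values, so through every initial pair $(t_0,x_0)$ there passes at least one Filippov solution, and it suffices to fix an arbitrary such solution $x(t)$ on its maximal existence interval $[t_0,t_{\max})$ and establish the two claims for it.

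The key device is the non-smooth chain rule. Because $V$ is locally Lipschitz, regular, and locally Lipschitz in $t$, and $x(t)$ is absolutely continuous, the composition $t\mapsto V(t,x(t))$ is absolutely continuous, hence differentiable for almost every $t$, and for almost all $t$ its derivative lies in $\dot{\tilde{V}}(x,t)$ (the Shevitz--Paden rule, \cite{Shevitz1994,Paden1987}; regularity of $V$ is exactly what makes the containment hold with this set rather than a larger one). Combining this with the hypothesis $\dot{\tilde{V}}(x,t)\le -W(x)$ — which by definition says $\varphi\le -W$ for every $\varphi\in\dot{\tilde{V}}$ — gives $\frac{d}{dt}V(t,x(t))\le -W(x(t))\le 0$ for almost all $t$, so $V(t,x(t))$ is non-increasing along the solution.

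Monotonicity then yields boundedness: for all $t$ in the existence interval, $W_1(x(t))\le V(t,x(t))\le V(t_0,x(t_0))\le W_2(x(t_0))$, and since $W_1$ is continuous, positive definite (and proper), its sublevel sets are bounded, so $x(t)$ stays in a fixed compact set $\mathcal{K}$; this rules out finite escape time, whence $t_{\max}=\infty$ and $x(\cdot)$ is bounded. On $\mathcal{K}\times[t_0,\infty)$ the map $\mathbb{K}[f]$ is uniformly bounded, so $\dot{x}$ is essentially bounded and $x(\cdot)$ is (globally) Lipschitz; as $W$ is continuous, hence uniformly continuous on $\mathcal{K}$, the map $t\mapsto W(x(t))$ is uniformly continuous on $[t_0,\infty)$.

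Finally, integrating the a.e.\ inequality gives $\int_{t_0}^{t}W(x(\tau))\,\mathrm{d}\tau\le V(t_0,x(t_0))-V(t,x(t))\le V(t_0,x(t_0))$ for every $t$, using $V(t,x(t))\ge W_1(x(t))\ge 0$; hence $W(x(\cdot))\in L^1[t_0,\infty)$. Uniform continuity together with integrability and Barbalat's lemma forces $\lim_{t\to\infty}W(x(t))=0$, which is the assertion. The main obstacle is the bookkeeping behind the second paragraph: making ``$\frac{d}{dt}V(t,x(t))\le -W(x(t))$ a.e.'' fully rigorous — i.e.\ the measurability and absolute-continuity details and the precise invocation of the non-smooth chain rule for regular functions along Filippov solutions. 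Once that is in hand, the remaining steps are a routine transcription of the smooth proof, and the whole statement is essentially that of \cite{Fischer2013}.
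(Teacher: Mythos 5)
The paper does not actually prove this lemma: it is quoted verbatim (as a known LaSalle--Yoshizawa-type corollary for non-smooth systems) from the cited reference \cite{Fischer2013}, so there is no in-paper proof to compare against. Your sketch reconstructs exactly the standard argument behind that reference: Filippov existence from measurability and essential local boundedness, the Shevitz--Paden chain rule (where regularity is indeed the hypothesis that lets you conclude $\frac{d}{dt}V(t,x(t))\in\dot{\tilde{V}}$ a.e.), monotonicity of $V$ along the solution, boundedness, uniform continuity of $t\mapsto W(x(t))$, integrability, and Barbalat. That is the right route and, modulo the bookkeeping you yourself flag, it goes through.

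One concrete caveat: your parenthetical ``(and proper)'' is doing real work that the statement does not grant you. As stated, $W_1$ is only continuous and positive definite; without radial unboundedness its sublevel sets need not be bounded, so $W_1(x(t))\leq V(t_0,x(t_0))$ does not by itself confine $x(t)$ to a compact set, and the global claim ``all Filippov solutions are bounded'' does not follow. The honest versions of this result either assume $W_1$ radially unbounded for the global conclusion or restrict initial conditions to a compact sublevel set contained in the domain of validity (which is how \cite{Fischer2013} phrases it); the lemma as transcribed in the paper is imprecise on this point, and your proof silently repairs it by strengthening the hypothesis rather than by argument. This does not affect the paper's use of the lemma in Proposition~3.1, where $W_1(s)=\lambda_{\min}(\Theta)\|s\|^2$ is quadratic and hence radially unbounded, but in a self-contained proof you should either state the properness assumption explicitly or weaken the conclusion to solutions starting in a suitable compact set.
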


\section{Main results}
\label{sec:3}
Due to the under-actuated nature of the VTOL UAV, a hierarchical strategy is applied to solve the coordinated trajectory tracking problem of multiple VTOL UAV systems. First, a distributed estimator using local information interaction is designed for each follower UAV to estimate the leader's desired information. Then, the coordinated trajectory tracking problem of multiple VTOL UAVs is transformed into the asymptotic stability problem of each individual error system. Next, a command force and an applied torque are exploited for each UAV to asymptotically stabilize the position and attitude error systems, respectively. Finally, the stability of each error system is analyzed.

\subsection{Distributed estimator design}
Since the leader's desired information including the desired position $p_r$ and its derivatives is not available to all the followers, a distributed estimator is firstly designed for each VTOL UAV to estimate them.

For $i\in\mathcal{V}$, we define $\hat{p}_i$, $\hat{v}_i$ and $\hat{a}_i\!=\!k_\gamma\tanh(\gamma_i)$ as the estimations of $p_r$, $\dot{p}_r$ and $\ddot{p}_r$, respectively, where $\gamma_i$ is an auxiliary variable and parameter $k_\gamma\geq\sup_{t\geq0}\ddot{p}_r(t)$. As will be shown subsequently, the definition of $\hat{a}_i$ using the hyperbolic tangent function enables the control parameters to be chosen explicitly in case of singularity in the command attitude extraction. For $i\in\mathcal{V}$, a distributed estimator is proposed as follows:
\begin{subequations}
\label{estimator}
\begin{align}
\dot{\hat{p}}_i=&\hat{v}_i-k_p\sum_{j=0}^nd_{ij}(\hat{p}_i-\hat{p}_j), \\
\dot{\hat{v}}_i=&\hat{a}_i-k_v\sum_{j=0}^nd_{ij}(\hat{v}_i-\hat{v}_j), \\
\begin{split}
\ddot{\gamma}_i=&-l_a\dot{\gamma}_i+2\Gamma_i\dot{\gamma}_i\\
&-\!\frac{k_a}{k_\gamma}\bar{\Gamma}_i^{-1}
\left[\left(\sum_{j=0}^nd_{ij}(\hat{a}_i\!-\!\hat{a}_j)\!+\!\sum_{j=0}^nd_{ij}(\dot{\hat{a}}_i\!-\!\dot{\hat{a}}_j)\right)\right.\\
&\left.+\mathrm{sgn}\left(\sum_{j=0}^nd_{ij}(\hat{a}_i-\hat{a}_j)+\sum_{j=0}^nd_{ij}(\dot{\hat{a}}_i-\dot{\hat{a}}_j)\right)\right],
\end{split}
\label{estimator_c}
\end{align}
\end{subequations}
where  $\hat{p}_0=p_r$, $\hat{v}_0=\dot{p}_r$ and $\hat{a}_0=\ddot{p}_r$ are specified, $d_{ij}$ is the $(i,j)$-th entry of the adjacent matrix $\mathcal{D}$ associated with the follower graph $\mathcal{G}_{n}$, $k_p$, $k_v$, $k_a$ and $l_a$ are positive parameters, and $\Gamma_i=\mathrm{diag}\{\mu_i^x,\mu_i^y,\mu_i^z\}$ with $\mu_i^k=\tanh(\dot{\gamma}^k_i)\dot{\gamma}^k_i$ and
$\bar{\Gamma}_i=\mathrm{diag}\{\bar{\mu}_i^x,\bar{\mu}_i^y,\bar{\mu}_i^z\}$ with $\bar{\mu}_i^k=1-\tanh^2(\gamma^k_i)$ for $k=x,y,z$. Next, define the estimation errors $\bar{p}_i=\hat{p}_i-p_r$, $\bar{v}_i=\hat{v}_i-\dot{p}_r$ and $\bar{a}_i=\hat{a}_i-\ddot{p}_r$ for $i\in\mathcal{V}$. It then follows from \eqref{estimator} that their dynamics satisfy
\vspace{-20pt}
\begin{subequations}
\label{estimator_error}
\small{\begin{align}
\dot{\bar{p}}_i=&\bar{v}_i-k_p\sum_{j=1}^nm_{ij}\bar{p}_j,\\
\dot{\bar{v}}_i=&\bar{a}_i-k_v\sum_{j=1}^nm_{ij}\bar{v}_j,\\
\ddot{\bar{a}}_i=&k_\gamma\bar{\Gamma}_i\ddot{\gamma}_i-2k_\gamma\bar{\Gamma}_i\Gamma_i\dot{\gamma}_i-p_r^{(4)}\notag\\
=&-l_ak_\gamma\bar{\Gamma}_i\dot{\gamma}_i-k_a\left(\sum_{j=1}^nm_{ij}\bar{a}_j+\sum_{j=1}^nm_{ij}\dot{\bar{a}}_j\right)\notag\\
&-k_a\mathrm{sgn}\left(\sum_{j=1}^nm_{ij}\bar{a}_j+\sum_{j=1}^nm_{ij}\dot{\bar{a}}_j\right)-p_r^{(4)}\notag\\
\begin{split}
=&-l_a\dot{\bar{a}}_i-k_a\left(\sum_{j=1}^nm_{ij}\bar{a}_j+\sum_{j=1}^nm_{ij}\dot{\bar{a}}_j\right)\\
&-k_a\mathrm{sgn}\left(\sum_{j=1}^nm_{ij}\bar{a}_j+\sum_{j=1}^nm_{ij}\dot{\bar{a}}_j\right)+N_p,\\
\end{split}
\end{align}}
\end{subequations}
where $m_{ij}$ denotes the $(i,j)$-th entry of $\mathcal{M}$ defined in Section \ref{sec:2}, and $N_p=l_ap_r^{(3)}-p_r^{(4)}$ is bounded according to Assumption \ref{assump:pr}. Equivalently, the error dynamics \eqref{estimator_error} can be rewritten as
\begin{subequations}
\label{estimator_error_1}
\begin{align}
\dot{\bar{p}}=&\bar{v}-k_p(\mathcal{M}\otimes I_3)\bar{p},\\
\dot{\bar{v}}=&\bar{a}-k_v(\mathcal{M}\otimes I_3)\bar{v},\\
\begin{split}
\ddot{\bar{a}}=&-l_a\dot{\bar{a}}-k_a(\mathcal{M}\otimes I_3)(\bar{a}+\dot{\bar{a}})\\
&-k_a\mathrm{sgn}\left((\mathcal{M}\otimes I_3)(\bar{a}+\dot{\bar{a}})\right)+\mathbf{1_n}\otimes N_p,
\end{split}\label{estimator_error_c}
\end{align}
\end{subequations}
where $\bar{p}$, $\bar{v}$ and $\bar{a}$ are the column stack vectors of
 $\bar{p}_i$, $\bar{v}_i$ and $\bar{a}_i$, respectively, and operator $\otimes$ denotes the kronecker product. Moreover, define a sliding surface $s_i=l_a\sum_{j=1}^nm_{ij}\bar{a}_j+\sum_{j=1}^nm_{ij}\dot{\bar{a}}_j$ for $i\in\mathcal{V}$, and correspondingly, its column stack vector $s=(\mathcal{M}\otimes I_3)(l_a\bar{a}+\dot{\bar{a}})$. It follows from \eqref{estimator_error_c} that the dynamics of $s$ satisfies
\begin{equation}
\dot{s}=(\mathcal{M}\otimes I_3)\left(-k_as-k_a\mathrm{sgn}(s)+\mathbf{1_n}\otimes N_p\right).\label{s_dyn}
\end{equation}
Theorem \ref{theorem:1} indicates that the developed distributed estimator \eqref{estimator} with appropriate parameters enables the estimation errors $\bar{p}_i$, $\bar{v}_i$ and  $\bar{a}_i$ for each VTOL UAV to converge to zero asymptotically.
\begin{theorem}
\label{theorem:1}
Under Assumptions \ref{assump:pr} and \ref{assump:graph}, if the estimator parameters $k_p$, $k_v$, $l_a$ and $k_a$ are chosen  based on
\begin{align}
&k_pk_v>\frac{\|\Theta\|^2}{\lambda_{\min}(\Xi)^2},\label{k_pv}\\
&l_a>\frac{k_p\lambda_{\min}(\Xi)\|\Theta\|^2}{\lambda_{\min}(\mathcal{M}^T\Theta\mathcal{M})(k_pk_v\lambda_{\min}(\Xi)^2-\|\Theta\|^2)},\label{l_a}\\
&k_a>\frac{2\sqrt{n}\|\Theta\mathcal{M}\|\bar{N}_p}{\lambda_{\min}(\Xi)},\label{beta}
\end{align}
where $\bar{N}_p\!=\!\sup_{t\geq0}\|N_p(t)\|$, and $\Theta$ and $\Xi$ are given in Lemma \ref{lemma:graph}, the distributed estimator \eqref{estimator} ensures that $\lim_{t\rightarrow\infty}\bar{p}_i(t)\!=\!0$, $\lim_{t\rightarrow\infty}\bar{v}_i(t)\!=\!0$ and $\lim_{t\rightarrow\infty}\bar{a}_i(t)\!=\!0$, $\forall i\!\in\!\mathcal{V}$.
\end{theorem}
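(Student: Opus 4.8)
The plan is to exploit the lower-triangular (cascade) structure of the error dynamics \eqref{estimator_error_1}: the acceleration channel, encoded in the sliding variable $s=(\mathcal{M}\otimes I_3)(l_a\bar a+\dot{\bar a})$ with dynamics \eqref{s_dyn}, is self-contained up to the bounded disturbance $\mathbf{1_n}\otimes N_p$; the velocity error $\bar v$ obeys a Hurwitz linear system forced by $\bar a$; and the position error $\bar p$ obeys a Hurwitz linear system forced by $\bar v$. So I would first drive $s$ (hence $\bar a$ and $\dot{\bar a}$) to zero, then propagate this forward through $\bar v$ and $\bar p$, and package everything into one non-smooth Lyapunov estimate so that Lemma \ref{lemma:sys} applies. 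Throughout, the weighting matrices $\Theta$ and $\Xi=\mathcal{M}^T\Theta+\Theta\mathcal{M}$ from Lemma \ref{lemma:graph} are the workhorses that turn the non-symmetric $\mathcal{M}$ into something sign-definite.

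\textbf{Step 1 (sliding variable).} Take a $\Theta$-weighted quadratic $V_s=\tfrac12 s^T(\Theta\otimes I_3)s$ and compute its generalized derivative along \eqref{s_dyn} via Definition \ref{def:gradient}. The drift term gives $-k_a s^T(\Theta\mathcal{M}\otimes I_3)s=-\tfrac{k_a}{2}s^T(\Xi\otimes I_3)s\le-\tfrac{k_a}{2}\lambda_{\min}(\Xi)\|s\|^2$; the Filippov set-valued $\mathrm{sgn}(s)$ term, handled through the \emph{strict diagonal dominance} of $\Xi$ (not just its positivity), yields a negative contribution controlling $\|s\|_1$; and the disturbance is bounded by $\sqrt{n}\,\|\Theta\mathcal{M}\|\,\bar N_p\,\|s\|$. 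Condition \eqref{beta} is precisely the reaching condition making the $\mathrm{sgn}$ term dominate $N_p$, so $\dot{\tilde V}_s\le-\tfrac{k_a}{2}\lambda_{\min}(\Xi)\|s\|^2-c\|s\|_1\le0$ for some $c>0$, and $s\to0$ (indeed in finite time, from a $\sqrt{V_s}$-type bound). Once $s\equiv0$, nonsingularity of $\mathcal{M}$ forces $l_a\bar a+\dot{\bar a}=0$, so $\bar a$ and $\dot{\bar a}$ decay exponentially with rate $l_a$.

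\textbf{Step 2 (linear cascade and assembly).} Treating $\bar a$ as a vanishing input, consider $V_{pv}=\tfrac12\bar p^T(\Theta\otimes I_3)\bar p+\tfrac12\bar v^T(\Theta\otimes I_3)\bar v$, whose derivative along \eqref{estimator_error_1} equals $-\tfrac{k_p}{2}\bar p^T(\Xi\otimes I_3)\bar p-\tfrac{k_v}{2}\bar v^T(\Xi\otimes I_3)\bar v+\bar p^T(\Theta\otimes I_3)\bar v+\bar v^T(\Theta\otimes I_3)\bar a$. The first cross term is dominated exactly when $k_pk_v\lambda_{\min}(\Xi)^2>\|\Theta\|^2$, i.e. condition \eqref{k_pv}, rendering the $(\bar p,\bar v)$ block negative definite; the residual $\bar v^T(\Theta\otimes I_3)\bar a$ is absorbed by augmenting the Lyapunov function with a term weighted by $\mathcal{M}^T\Theta\mathcal{M}$ and built from $l_a\bar a+\dot{\bar a}$ (equivalently from $s$), and completing the squares there is where the lower bound \eqref{l_a} on $l_a$ is forced. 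Taking $V=V_{pv}+(\text{the }s/\bar a\text{ term})$, the choices \eqref{k_pv}--\eqref{beta} give $W_1\le V\le W_2$ and $\dot{\tilde V}\le-W$ with $W$ continuous positive (semi)definite in $(\bar p,\bar v,\bar a,\dot{\bar a})$; Lemma \ref{lemma:sys} then yields boundedness of all Filippov solutions together with $W\to0$, hence $\bar p_i\to0$, $\bar v_i\to0$, $\bar a_i\to0$ for every $i\in\mathcal{V}$.

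\textbf{Main obstacle.} The crux is Step 1: carrying out the generalized-gradient calculus for the discontinuous $\mathrm{sgn}(s)$ term rigorously and extracting from it — using the strict diagonal dominance of $\Xi$ together with the structure of $\Theta\mathcal{M}$ — a negative term large enough to beat $\sqrt{n}\|\Theta\mathcal{M}\|\bar N_p$, which is exactly what pins down the form of \eqref{beta}. A secondary difficulty is choosing the cross-weighting between the $\bar v$ block and the $\bar a$ block so that the single inequality \eqref{l_a}, rather than a more conservative coupling condition, suffices for sign-definiteness of $\dot{\tilde V}$; the propagation through the Hurwitz $\bar v$- and $\bar p$-subsystems is then routine bookkeeping.
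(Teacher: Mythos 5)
Your overall cascade architecture matches the paper (first kill the sliding variable $s$, then propagate through the $\bar a$, $\bar v$, $\bar p$ chain with a $\Theta$-weighted quadratic function whose positive definiteness is exactly conditions \eqref{k_pv}--\eqref{l_a}, closing with an ISS-type absorption of the vanishing input). The genuine gap is in Step 1. With the purely quadratic choice $V_s=\tfrac12 s^T(\Theta\otimes I_3)s$, the signum term enters the derivative as $-k_a\,s^T(\Theta\mathcal M\otimes I_3)\,\mathrm{sgn}(s)$, and this bilinear form cannot be symmetrized to $\Xi$: the identity $x^TAx=\tfrac12x^T(A+A^T)x$ needs the same vector on both sides, so strict diagonal dominance of $\Xi$ gives you nothing here. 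What you can actually extract is
\begin{equation*}
s^T(\Theta\mathcal M\otimes I_3)\,\mathrm{sgn}(s)\;\ge\;\sum_{i=1}^n\theta_i\Bigl(m_{ii}-\sum_{j\neq i}|m_{ij}|\Bigr)\|s_i\|_1\;=\;\sum_{i=1}^n\theta_i d_{i0}\|s_i\|_1,
\end{equation*}
and $d_{i0}=0$ for every follower without a direct link to the leader, so there is no uniform $-c\|s\|_1$ term. Concretely, with two followers, $\mathcal M=\left[\begin{smallmatrix}1&0\\-1&1\end{smallmatrix}\right]$, $s_1=\epsilon>0$ tiny and $s_2=M>0$ large, one gets $s^T\Theta\mathcal M\,\mathrm{sgn}(s)=\theta_1\epsilon$, negligible compared with $\|s\|_1\approx M$. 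Consequently the only negative term of uniform strength is $-\tfrac{k_a}{2}\lambda_{\min}(\Xi)\|s\|^2$, which against the disturbance bound $\sqrt n\,\|\Theta\mathcal M\|\bar N_p\|s\|$ yields merely ultimate boundedness of $s$, not $s\to0$; your finite-time claim, and hence the clean step ``once $s\equiv0$, $l_a\bar a+\dot{\bar a}=0$,'' collapses with it.

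This is precisely why the paper's Lyapunov function is $L^s=\sum_i\theta_i\|s_i\|_1+s^T(\Theta\otimes I_3)s$: the weighted $1$-norm part makes the generalized gradient supply an element $\phi\in\partial\|s\|_1$ alongside $s$, so the derivative takes the form $(\phi+s)^T(\Theta\mathcal M\otimes I_3)(-k_a s-k_a\delta+\mathbf{1_n}\otimes N_p)$ with $\delta\in\partial\|s\|_1$; the Paden--Sastry min-selection of $\phi$ then legitimately produces the symmetric bound $\tfrac12(\phi+s)^T(\Xi\otimes I_3)(\phi+s)$, and the leverage $\|\phi+s\|\ge1$ whenever $s\neq0$ is what lets $k_a$ satisfying \eqref{beta} dominate $\bar N_p$ all the way down to $s=0$ (asymptotically, via the non-smooth Lemma~\ref{lemma:sys}, not in finite time). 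If you repair Step 1 along these lines, your Step 2 is essentially the paper's Proposition~\ref{pro:pva}; note only that since $s$ merely tends to zero, the exponential-decay shortcut for $\bar a$ must be replaced by the ISS/comparison argument on $(\mathcal M\otimes I_3)\dot{\bar a}=-l_a(\mathcal M\otimes I_3)\bar a+s$, which the paper carries out with the $\tfrac12\bar a^T(\mathcal M^T\Theta\mathcal M\otimes I_3)\bar a$ term and the comparison function $V^e=\sqrt{2L^e}$.
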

\begin{proof}
The proof is divided into two parts: first, the sliding surface $s_i$, $i\in\mathcal{V}$ is proven in Proposition \ref{pro:s} to converge to zero asymptotically; then, the final result is shown in Proposition \ref{pro:pva}.
\end{proof}

\begin{proposition}
\label{pro:s}
Under Assumptions \ref{assump:pr} and \ref{assump:graph}, if the estimator parameter $k_a$ satisfies \eqref{beta}, the distributed estimator \eqref{estimator} guarantees that $\lim_{t\rightarrow\infty}s_i(t)\!=\!0$, $\forall i\in\mathcal{V}$.
\end{proposition}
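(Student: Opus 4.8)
The plan is to read \eqref{s_dyn} as a Filippov differential inclusion — the only discontinuity comes from the $\mathrm{sgn}(\cdot)$ term — and to invoke the non‑smooth Lyapunov criterion of Lemma \ref{lemma:sys}. First I would record that the right‑hand side $f$ of \eqref{s_dyn} is essentially locally bounded (because $N_p$ is bounded by Assumption \ref{assump:pr} and $\mathrm{sgn}$ is bounded), and that $0\in\mathbb{K}[f](0,t)$ for every $t$: at $s=0$ the Filippov set is $(\mathcal{M}\otimes I_3)(-k_a\mathbb{K}[\mathrm{sgn}](0)+\mathbf{1_n}\otimes N_p)$, and since \eqref{beta} in particular forces $k_a>\bar{N}_p$, taking $\nu=k_a^{-1}\mathbf{1_n}\otimes N_p\in[-1,1]^{3n}$ yields the zero vector. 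Thus the hypotheses of Lemma \ref{lemma:sys} will be in force.

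For the Lyapunov function I would take $V=\tfrac{1}{2}s^T(\Theta\otimes I_3)s$ with $\Theta$ the diagonal matrix of Lemma \ref{lemma:graph}; it is $C^1$, hence regular, and satisfies $\tfrac{1}{2}\lambda_{\min}(\Theta)\|s\|^2\le V\le\tfrac{1}{2}\lambda_{\max}(\Theta)\|s\|^2$. Along \eqref{s_dyn} the generalized derivative is $\dot{\tilde{V}}=\{\,s^T(\Theta\mathcal{M}\otimes I_3)(-k_as-k_a\nu+\mathbf{1_n}\otimes N_p):\nu\in\mathbb{K}[\mathrm{sgn}](s)\,\}$, which I would bound term by term. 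The linear consensus part gives $-k_as^T(\Theta\mathcal{M}\otimes I_3)s=-\tfrac{k_a}{2}s^T(\Xi\otimes I_3)s\le-\tfrac{k_a}{2}\lambda_{\min}(\Xi)\|s\|^2$, using $\mathcal{M}^T\Theta+\Theta\mathcal{M}=\Xi$ from Lemma \ref{lemma:graph}.

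The delicate step is handling the discontinuous feedback jointly with the disturbance. For any admissible $\nu$ I would split $s^T(\Theta\mathcal{M}\otimes I_3)\nu=\sum_{i,j}\theta_im_{ij}s_i^T\nu_j$ into diagonal and off‑diagonal parts: the diagonal terms equal $\theta_im_{ii}\|s_i\|_1$, while each off‑diagonal term obeys $\theta_im_{ij}s_i^T\nu_j\ge\theta_im_{ij}\|s_i\|_1$ since $m_{ij}\le0$ (M‑matrix sign pattern) and $|s_i^T\nu_j|\le\|s_i\|_1$; summing and using the row‑sum identity $\sum_jm_{ij}=d_{i0}$, i.e. $\mathcal{M}\mathbf{1_n}=d_0$, gives $s^T(\Theta\mathcal{M}\otimes I_3)\nu\ge\sum_i\theta_id_{i0}\|s_i\|_1$. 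The same identity makes the disturbance term collapse to $s^T(\Theta\mathcal{M}\otimes I_3)(\mathbf{1_n}\otimes N_p)=\sum_i\theta_id_{i0}s_i^TN_p\le\bar{N}_p\sum_i\theta_id_{i0}\|s_i\|_1$. Adding the three contributions and using $k_a>\bar{N}_p$, every element of $\dot{\tilde{V}}$ satisfies $\dot{\tilde{V}}\le-\tfrac{k_a}{2}\lambda_{\min}(\Xi)\|s\|^2-(k_a-\bar{N}_p)\sum_i\theta_id_{i0}\|s_i\|_1\le-\tfrac{k_a}{2}\lambda_{\min}(\Xi)\|s\|^2$.

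Finally I would apply Lemma \ref{lemma:sys} with $W(s)=\tfrac{k_a}{2}\lambda_{\min}(\Xi)\|s\|^2$, which is continuous and positive definite; it gives boundedness of all Filippov solutions and $W(s(t))\to0$, hence $s(t)\to0$ and in particular $\lim_{t\to\infty}s_i(t)=0$ for all $i\in\mathcal{V}$. The hard part is the estimate of the third paragraph: one has to notice that both the reaching benefit of $-k_a(\mathcal{M}\otimes I_3)\mathrm{sgn}(s)$ and the disturbance $(\mathcal{M}\otimes I_3)(\mathbf{1_n}\otimes N_p)$ pass through the \emph{same} row sums of $\mathcal{M}$, so they cancel up to the margin $k_a-\bar{N}_p$; this is invisible if the two terms are bounded by unrelated matrix‑norm inequalities, and it is precisely why condition \eqref{beta} (which guarantees $k_a>\bar{N}_p$) is enough. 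Care is also needed with the set‑valued calculus on the measure‑zero set where $\mathrm{sgn}$ is multivalued, but the componentwise sign‑matching of every $\nu\in\mathbb{K}[\mathrm{sgn}](s)$ keeps all the above bounds uniform in $\nu$.
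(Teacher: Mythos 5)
Your argument is correct, but it follows a genuinely different route from the paper's. The paper works with the non-smooth Lyapunov function $L^s=\sum_{i=1}^n\theta_i\|s_i\|_1+s^T(\Theta\otimes I_3)s$, computes its generalized derivative, and uses the Paden--Sastry selection $\phi=\arg\min_{\delta\in\partial\|s\|_1}(\delta+s)^T\bigl(\tfrac{\Theta\mathcal{M}+\mathcal{M}^T\Theta}{2}\otimes I_3\bigr)(\delta+s)$ to turn every term into the quadratic form $\tfrac12(\phi+s)^T(\Xi\otimes I_3)(\phi+s)$; the disturbance is then absorbed by a crude norm bound $\sqrt{n}\|\Theta\mathcal{M}\|\bar{N}_p\|\phi+s\|$ together with the observation that $\|\phi+s\|\ge1$ whenever $\phi\neq0$, which is exactly where the full gain condition \eqref{beta} is consumed. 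You instead keep the smooth quadratic function $\tfrac12 s^T(\Theta\otimes I_3)s$ and exploit structure the paper never uses: the M-matrix sign pattern of $\mathcal{M}$ (diagonal $\ge0$, off-diagonal $\le0$) and the row-sum identity $\mathcal{M}\mathbf{1_n}=d_0$, so that the benefit of $-k_a(\mathcal{M}\otimes I_3)\mathrm{sgn}(s)$ and the matched disturbance $(\mathcal{M}\otimes I_3)(\mathbf{1_n}\otimes N_p)$ pass through the same weights $\theta_id_{i0}$ and cancel up to the margin $k_a-\bar{N}_p$. This buys a more elementary proof (no generalized gradient of $\|\cdot\|_1$, no argmin selection), an explicit verification that $0\in\mathbb{K}[f](0,t)$ (a hypothesis of Lemma \ref{lemma:sys} the paper does not check), and it even shows that for this proposition the weaker requirement $k_a>\bar{N}_p$ suffices; the paper's route, by contrast, needs no sign or row-sum structure of the perturbation and so would survive an arbitrary bounded disturbance in place of $\mathbf{1_n}\otimes N_p$, at the price of the stronger condition \eqref{beta}. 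One step you assert without justification is that \eqref{beta} ``in particular forces $k_a>\bar{N}_p$''; this is true but deserves a line, e.g.\ for any unit vector $x$, $x^T\Xi x=2x^T\Theta\mathcal{M}x\le2\|\Theta\mathcal{M}\|$, hence $\lambda_{\min}(\Xi)\le2\|\Theta\mathcal{M}\|\le2\sqrt{n}\|\Theta\mathcal{M}\|$ and the right-hand side of \eqref{beta} is at least $\bar{N}_p$. With that line added, your proof is complete and uniform over all selections $\nu\in\mathbb{K}[\mathrm{sgn}](s)$, so Lemma \ref{lemma:sys} applies with $W(s)=\tfrac{k_a}{2}\lambda_{\min}(\Xi)\|s\|^2$ exactly as you state.
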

\begin{proof}
Obviously, system \eqref{s_dyn} is non-smooth; thereafter, the solution of \eqref{s_dyn} is studied in the sense of Filippov and the non-smooth framework is applied. The stability of system \eqref{s_dyn} is to be proven based on Lemma \ref{lemma:sys}.

We first propose a Lyapunov function $L^s\!=\!\sum_{i=1}^n\theta_i\|s_i\|_1\!+s^T(\Theta\otimes I_3)s$,
where $\theta_i$ is the $i$-th diagonal entry of $\Theta$. Note that $L^s$ is non-smooth but regular \citep{Paden1987}.
It can be derived that $L^s$ is bounded by
\begin{equation*}
W_1(s)\leq L^s\leq W_2(s),
\end{equation*}
where $W_1(s)=\lambda_{\min}(\Theta)\|s\|^2$ and $W_2(s)=\sqrt{n}\lambda_{\max}(\Theta)\\\cdot(\|s\|_1+\|s\|^2)$. In terms of Lemma \ref{lemma:sys}, the stable result can be deduced if only the generalized derivative of $L^s$ satisfies $\dot{\tilde{L}}^s\leq W(s)$, where $W(s)$ is a continuous positive semi-definite function.

According to Definition \ref{def:gradient}, the generalized derivative of $L^s$ along \eqref{s_dyn} satisfies
\begin{align}
\dot{\tilde{L}}^s=\bigcap_{\mathclap{\phi\in\partial L^s}}&(\phi+s)^T(\Theta\otimes I_3)\mathbb{K}[(\mathcal{M}\otimes I_3)(-k_as-k_a\mathrm{sgn}(s)\notag\\
&+\mathbf{1_n}\otimes N_p)],\notag\\
=\!\bigcap_{\mathclap{\phi\in\partial L^s}}&(\phi+s)^T(\Theta\mathcal{M}\!\otimes I_3)(-k_as\!-k_a\partial\|s\|_1+\mathbf{1_n}\!\otimes N_p),
\end{align}
where $\partial|s_i^j|=\begin{cases}\{1\}&s_i^j\in\mathbb{R}^+\\\{-1\}&s_i^j\in\mathbb{R}^-\\ [-1,1]&s_i^j=0\end{cases},\forall i\in\mathcal{V}$, $j=x,y,z$, and the calculation of $\mathbb{K}$ is applied using the same argument given in \cite{Paden1987}.

If $\dot{\tilde{L}}^s\neq\emptyset$, suppose $\varphi\notin\dot{\tilde{L}}^s$, then we know that
$\forall \phi\in\partial\|s\|_1$, $\varphi=(\phi+s)^T(\Theta\mathcal{M}\otimes I_3)(-k_as-k_a\delta+\mathbf{1_n}\otimes N_p)$ for some $\delta\in\partial\|s\|_1$. Choose
$\phi=\arg\min_{\delta\in\partial\|s\|_1}(\delta+s)^T(\frac{\Theta\mathcal{M}+\mathcal{M}^T\Theta}{2}\otimes I_3)(\delta+s)$. According to \cite{Paden1987}, for all $\delta\in\partial\|s\|_1$, we have that
\begin{align*}
&(\phi+s)^T(\Theta\mathcal{M}\otimes I_3)(\delta+s)\notag\\
\geq&(\phi+s)^T\left(\frac{\Theta\mathcal{M}+\mathcal{M}^T\Theta}{2}\otimes I_3\right)(\phi+s)\notag\\
=&\frac{1}{2}(\phi+s)^T(\Xi\otimes I_3)(\phi+s).
\end{align*}
It then follows that $\dot{\tilde{L}}^s$ further satisfies
\begin{align}
\dot{\tilde{L}}^s\leq&-\frac{k_a}{2}(\phi+s)^T(\Xi\otimes\! I_3)(\phi+s)+\!(\phi+s)^T(\mathbf{1_n}\otimes\! N_p)\notag\\
\leq&-\frac{k_a\lambda_{\min}(\Xi)}{2}\|\phi+s\|^2+\sqrt{n}\|\phi+s\|\|\Theta\mathcal{M}\|\bar{N}_p\notag\\
=&-\left(\frac{k_a\lambda_{\min}(\Xi)}{2}\|\phi+s\|-\sqrt{n}\|\Theta\mathcal{M}\|\bar{N}_p\right)\|\phi+s\|.
\end{align}
Note that if $\phi=0$, then $s=0$ and $\|\phi+s\|=0$, and if $\phi\neq0$, then $\|\phi+s\|\geq1$. Hence, if the estimator parameter $k_a$ satisfies \eqref{beta}, there exists a constant $\bar{k}_a$ satisfying
$\frac{2\sqrt{n}\|\Theta \mathcal{M}\|\bar{N}_p}{\lambda_{\min}(\Xi)}\leq\bar{k}_a<k_a$
such that $-(\frac{\bar{k}_a\lambda_{\min}(\Xi)}{2}\|\phi+s\|-\sqrt{n}\|\Theta\mathcal{M}\|\bar{N}_p)\|\phi+s\|\leq0$. Therefore, it follows that
\begin{equation}
\dot{\tilde{L}}^s\leq-\frac{(k_a-\bar{k}_a)\lambda_{\min}(\Xi)}{2}\|\phi+s\|^2.
\end{equation}
In addition, each entry of $s$ has the same sign as its counterpart in $\mathrm{sgn}(s)$, and thus, it follows that  $\|\phi+s\|\geq\|s\|$. We finally have that
\begin{equation}
\dot{\tilde{L}}^s\leq-\frac{(k_a-\bar{k}_a)\lambda_{\min}(\Xi)}{2}\|s\|^2=-W(s).
\end{equation}
Since $W(s)\geq0$ has been ensured, it follows that $\int_{0}^tW(s(\tau))\mathrm{d}\tau$ is bounded for $t\geq0$, which implies that $s$ is bounded. Hence, it follows from \eqref{s_dyn} that $\dot{s}$ is bounded, which implies that $s$ is uniformly continuous in $t$.  This means that $W(s)$ is uniformly continuous in $t$. Based on Lemma \ref{lemma:sys}, we can conclude that $\lim_{t\rightarrow\infty}W(s(t))=0$, which further implies that $\lim_{t\rightarrow\infty}s(t)=0$.
\end{proof}

\begin{proposition}
\label{pro:pva}
Under Assumptions \ref{assump:pr} and \ref{assump:graph}, if the estimator parameters $k_p$, $k_v$ and $l_a$ satisfy \eqref{k_pv} and \eqref{l_a}, then $\lim_{t\rightarrow\infty}s_i(t)\!=\!0$ is sufficient to ensure that $\lim_{t\rightarrow\infty}\bar{p}_i(t)\!=\!0$, $\lim_{t\rightarrow\infty}\bar{v}_i(t)=0$ and $\lim_{t\rightarrow\infty}\bar{a}_i(t)=0$, $\forall i\in\mathcal{V}$.
\end{proposition}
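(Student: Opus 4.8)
The plan is to exploit the cascade structure of the error dynamics \eqref{estimator_error_1}. Since Proposition \ref{pro:s} already establishes $\lim_{t\to\infty}s(t)=0$ where $s=(\mathcal{M}\otimes I_3)(l_a\bar a+\dot{\bar a})$ and $\mathcal{M}$ is non-singular by Lemma \ref{lemma:graph}, we obtain $\lim_{t\to\infty}(l_a\bar a+\dot{\bar a})=0$. I would treat $\dot{\bar a}=-l_a\bar a+(\mathcal{M}\otimes I_3)^{-1}s$ as a stable linear system (pole at $-l_a<0$) driven by a vanishing input $(\mathcal{M}\otimes I_3)^{-1}s\to 0$; by input-to-state stability of a Hurwitz linear system, or equivalently by a convolution/standard ``vanishing perturbation'' argument, $\bar a(t)\to 0$, and consequently $\dot{\bar a}(t)\to 0$ as well. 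This disposes of the $\bar a_i$ claim.

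Next I would propagate this up the cascade. With $\bar a\to 0$, the $\bar v$-subsystem $\dot{\bar v}=-k_v(\mathcal{M}\otimes I_3)\bar v+\bar a$ is again a linear system whose system matrix $-k_v(\mathcal{M}\otimes I_3)$ is Hurwitz: by Lemma \ref{lemma:graph} all eigenvalues of $\mathcal{M}$ have positive real parts. The same ISS argument then yields $\bar v(t)\to 0$, and finally applying it once more to $\dot{\bar p}=-k_p(\mathcal{M}\otimes I_3)\bar p+\bar v$ gives $\bar p(t)\to 0$. This three-stage linear cascade is clean, but I suspect the authors instead want a single composite Lyapunov function, because conditions \eqref{k_pv} and \eqref{l_a} are explicit quantitative inequalities coupling $k_p$, $k_v$ and $l_a$ — these would not be needed for the purely qualitative cascade argument above (any positive $k_p,k_v,l_a$ suffice for that). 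So the intended route is almost certainly: pick $V=c_1\bar p^T(\Theta\otimes I_3)\bar p+c_2\bar v^T(\Theta\otimes I_3)\bar v+c_3(l_a\bar a+\dot{\bar a})^T(\Theta\otimes I_3)(l_a\bar a+\dot{\bar a})$ (or a similar weighted sum), differentiate along \eqref{estimator_error_1}, use $\mathcal{M}^T\Theta+\Theta\mathcal{M}=\Xi>0$ to bound the ``diagonal'' terms by $-\lambda_{\min}(\Xi)\|\cdot\|^2$, and bound the cross terms $\bar v^T(\Theta\otimes I_3)\bar p$-type contributions using $\|\Theta\|$ and Young's inequality; conditions \eqref{k_pv}--\eqref{l_a} are precisely what makes the resulting quadratic form in $(\|\bar p\|,\|\bar v\|,\|l_a\bar a+\dot{\bar a}\|)$ negative definite once the $s\to0$ term is handled.

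Concretely I would proceed as follows. First rewrite the $\bar a$-dynamics in terms of $\eta\triangleq l_a\bar a+\dot{\bar a}$, so that $\dot{\bar a}=-l_a\bar a+\eta$ with $\eta$ such that $(\mathcal{M}\otimes I_3)\eta=s\to 0$; hence it suffices to handle the chain $\dot{\bar a}=-l_a\bar a+(\mathcal{M}\otimes I_3)^{-1}s$, $\dot{\bar v}=-k_v(\mathcal{M}\otimes I_3)\bar v+\bar a$, $\dot{\bar p}=-k_p(\mathcal{M}\otimes I_3)\bar p+\bar v$. Second, form the composite Lyapunov candidate with weights chosen so that the gains $k_p,k_v,l_a$ enter as in \eqref{k_pv}--\eqref{l_a}; compute $\dot V$, apply Lemma \ref{lemma:graph} to the quadratic terms, bound the interconnection terms via Cauchy--Schwarz with $\|\Theta\|$, and complete squares. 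Third, verify that \eqref{k_pv} guarantees negativity of the $2\times2$ block in $(\bar p,\bar v)$ (the cross term from $\bar v$ feeding $\bar p$), and that \eqref{l_a} additionally absorbs the $\bar a\to\bar v$ channel, leaving $\dot V\le -c\|(\bar p,\bar v,\bar a,\dot{\bar a})\|^2+(\text{terms linear in }\|s\|)$; since $s\in L_2$ (from Proposition \ref{pro:s}, where $\int_0^tW(s)\,d\tau$ was shown bounded and $W(s)\sim\|s\|^2$) and $s\to 0$, a standard argument (e.g. integrating $\dot V$, or invoking a vanishing-perturbation lemma together with boundedness of all signals and uniform continuity) gives $\bar p,\bar v,\bar a\to 0$. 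The main obstacle is the bookkeeping in step two: choosing the Lyapunov weights and carrying the cross-term bounds so that \eqref{k_pv} and \eqref{l_a} emerge exactly as the negative-definiteness conditions — the qualitative convergence is never in doubt, but matching the stated explicit thresholds requires the careful completion-of-squares that I would defer to the detailed proof.
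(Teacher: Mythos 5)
Your primary argument (the cascade one) is correct and is genuinely different from the paper's proof. The paper does exactly what you anticipated in your second paragraph: it sets $(\mathcal{M}\otimes I_3)\dot{\bar a}=-l_a(\mathcal{M}\otimes I_3)\bar a+s$, takes the composite Lyapunov function $L^e=\bar p^T(\Theta\otimes I_3)\bar p+\bar v^T(\Theta\otimes I_3)\bar v+\tfrac12\bar a^T(\mathcal{M}^T\Theta\mathcal{M}\otimes I_3)\bar a$, uses $\Xi=\mathcal{M}^T\Theta+\Theta\mathcal{M}$ for the diagonal terms and $\|\Theta\|$ bounds for the cross terms, and shows that \eqref{k_pv}--\eqref{l_a} make the resulting $3\times3$ matrix $\Omega$ in $(\|\bar p\|,\|\bar v\|,\|\bar a\|)$ positive definite; the residual $\|\Theta\mathcal{M}\|\,\|\bar a\|\,\|s\|$ term is then handled not by an $L_2$/integration argument but by passing to $V^e=\sqrt{2L^e}$, obtaining $\dot V^e\le-\vartheta_1V^e+\vartheta_2\|s\|$, and invoking ISS of the scalar comparison system together with $s\to0$. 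Your cascade route --- $s\to0$ and invertibility of $\mathcal{M}$ give $l_a\bar a+\dot{\bar a}\to0$, then three successive Hurwitz linear systems ($-l_aI$, $-k_v\mathcal{M}\otimes I_3$, $-k_p\mathcal{M}\otimes I_3$, the latter two Hurwitz by Lemma \ref{lemma:graph}) with vanishing inputs yield $\bar a\to0$, $\bar v\to0$, $\bar p\to0$ --- is complete as stated, more elementary, and your observation is accurate: it never uses \eqref{k_pv} or \eqref{l_a}, so it proves the proposition under strictly weaker parameter requirements (any positive gains). What the paper's Lyapunov route buys in exchange is an explicit decay estimate $\dot V^e\le-\vartheta_1V^e+\vartheta_2\|s\|$ with rates tied to $\Theta$, $\Xi$ and the gain conditions, i.e.\ a quantitative ISS bound from $s$ to $(\bar p,\bar v,\bar a)$ rather than a purely qualitative limit; your route buys brevity and reveals that the explicit thresholds are not intrinsic to this step. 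The only caution is that your fallback sketch of the Lyapunov route defers the completion-of-squares bookkeeping, and your guessed third Lyapunov term (weighting $l_a\bar a+\dot{\bar a}$ by $\Theta$) differs from the paper's $\mathcal{M}^T\Theta\mathcal{M}$-weighted term in $\bar a$; but since your cascade argument already stands on its own, this does not affect correctness.
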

\begin{proof}
Consider the definition of the sliding surface $s$, then the dynamics of the estimator error $\bar{a}$ satisfies
\begin{equation}
(\mathcal{M}\otimes I_3)\dot{\bar{a}}=-l_a(\mathcal{M}\otimes I_3)\bar{a}+s.\label{error_a}
\end{equation}
Define $z\!=\![\bar{p}^T,\bar{v}^T,\bar{a}^T]^T$ and assign a Lyapunov function
\begin{align*}
L^e=\bar{p}^T(\Theta\otimes I_3)\bar{p}+\bar{v}^T(\Theta\!\otimes\! I_3)\bar{v}+\frac{1}{2}\bar{a}^T(\mathcal{M}^T\Theta\mathcal{M}\otimes I_3)\bar{a}.
\end{align*}
It is bounded by
\begin{equation*}
\lambda_1\|z\|^2\leq L^e\leq\lambda_2\|z\|^2,
\end{equation*}
where $\lambda_1\!=\!\min\{\lambda_{\min}(\Theta),\frac{1}{2}\lambda_{\min}(\mathcal{M}^T\Theta\mathcal{M})\}$ and $\lambda_2=\max\{\lambda_{\max}(\Theta),\frac{1}{2}\lambda_{\max}(\mathcal{M}^T\Theta\mathcal{M})\}$.
The derivative of $L^e$ along \eqref{estimator_error_1} and \eqref{error_a} satisfies
\begin{align}
\label{dL_e}
\dot{L}^e=&-\bar{p}^T((\mathcal{M}^T\Theta+\Theta\mathcal{M})\otimes I_3)\bar{p}+2\bar{p}^T(\Theta\otimes I_3)\bar{v}\notag\\
&-\bar{v}^T((\mathcal{M}^T\Theta+\Theta\mathcal{M})\otimes I_3)\bar{v}+2\bar{v}^T(\Theta\otimes I_3)\bar{a}\notag\\
&+\bar{a}^T(\mathcal{M}^T\Theta\otimes I_3)(-l_a(\mathcal{M}\otimes I_3)\bar{a}+s)\notag\\
=&-k_p\bar{p}^T(\Xi\otimes I_3)\bar{p}+2\bar{p}^T(\Theta\otimes I_3)\bar{v}-k_v\bar{v}^T(\Xi\otimes I_3)\bar{v}\notag\\
&+2\bar{v}^T(\Theta\otimes I_3)\bar{a}-l_a\bar{a}^T(\mathcal{M}^T\Theta\mathcal{M}\otimes I_3)\bar{a}\notag\\
&+\bar{a}^T(\mathcal{M}^T\Theta\otimes I_3)s\notag\\
\leq&-k_p\lambda_{\min}(\Xi)\|\bar{p}\|^2+2\|\Theta\|\|\bar{p}\|\|\bar{v}\|-k_v\lambda_{\min}(\Xi)\|\bar{v}\|^2\notag\\
&+2\|\Theta\|\|\bar{v}\|\|\bar{a}\|-l_a\lambda_{\min}(\mathcal{M}^T\Theta\mathcal{M})\|\bar{a}\|^2\notag\\
&+\|\Theta\mathcal{M}\|\|\bar{a}\|\|s\|\notag\\
\leq&-z_1^T\Omega z_1+\|\Theta\mathcal{M}\|\|a\|\|s\|,
%\leq&-2\vartheta_1L^a+\vartheta_2\|s\|\sqrt{2L_a}+\bar{a}^T(\mathcal{M}^T\Theta\otimes I_3)s,
\end{align}
where $z_1=[\|\bar{p}\|,\|\bar{v}\|,\|\bar{a}\|]^T$ and
\begin{equation*}
\Omega=\left[\begin{array}{ccc}
k_p\lambda_{\min}(\Xi)&-\|\Theta\|&0\\
-\|\Theta\|&k_v\lambda_{\min}(\Xi)&-\|\Theta\|\\
0&-\|\Theta\|&-l_a\lambda_{\min}(\mathcal{M}^T\Theta\mathcal{M})\\
\end{array}\right].
\end{equation*}
If the estimator parameters $k_p$, $k_v$ and $l_a$ are chosen based on \eqref{k_pv} and \eqref{l_a}, then $\Omega$ is positive definite. In this case, $\dot{L}^e$ further satisfies
\begin{align}
\label{dL_e_1}
\dot{L}^e\leq&-\lambda_{\min}(\Omega)\|z\|^2+\|\Theta\mathcal{M}\|\|z\|\|s\|\notag\\
\leq&-\vartheta_1 L^e+\vartheta_2\|s\|\sqrt{L^e},
\end{align}
where $\vartheta_1=\frac{\lambda_{\min}(\Omega)}{\lambda_2}$, $\vartheta_2=\frac{\|\Theta\mathcal{M}\|}{\sqrt{\lambda_1}}$, and $\|z\|=\|z_1\|$ has been applied. Next, take $V^e=\sqrt{2L^e}$. When $L^e\neq0$, it follows from \eqref{dL_e_1} that the derivative of $V^e$ satisfies
\begin{equation}
\dot{V}^e\leq-\vartheta_1V^e+\vartheta_2\|s\|.\label{V_e}
\end{equation}
When $V^e=0$, it can be shown that $D^+V^e\leq\vartheta_2\|s\|$. Thus, it follows that $D^+V^e$ satisfies \eqref{V_e} all the time \citep{Khalil2002}. For system $\dot{y}=-\vartheta_1y+\vartheta_2\|s\|$ with respect to $y\in[0,\infty)$, it can be proven in terms of input-to-state stability theory \citep{Khalil2002} that $\lim_{t\rightarrow\infty}y(t)=0$ given the fact that $\lim_{t\rightarrow\infty}s(t)=0$. According to Comparison Principal \citep{Khalil2002}, it follows that $\lim_{t\rightarrow\infty}V^e(t)=0$, which further implies that $\lim_{t\rightarrow\infty}\bar{p}(t)=0$, $\lim_{t\rightarrow\infty}\bar{v}(t)=0$ and $\lim_{t\rightarrow\infty}\bar{a}(t)=0$.
\end{proof}

\begin{remark}
According to \eqref{estimator},  singularity may occur in the distributed estimator when some diagonal entry of $\bar{\Gamma}_i$ equals to zero, and this corresponds to the case where some entry of the auxiliary variable $\gamma_i$ tends to infinity. Theorem \ref{theorem:1} has shown that, with a bounded initial value, the estimation error $\bar{a}_i$ for each UAV is bounded all the time. This implies that $\bar{\Gamma}_i$ is always positive definite. Consequently, no singularity is introduced in the developed distributed estimator \eqref{estimator}.
\end{remark}

\subsection{Problem transformation}
\label{sec:3.2}
Since the leader's desired information has been estimated via the distributed estimator \eqref{estimator} for each follower VTOL UAV, the remaining problem is to transform the coordinated trajectory tracking problem into the simultaneous tracking problem for the decoupled VTOL UAV group. This is verified as follows.

Define the position error $p_i^e=p_i-p_r-\delta_i$ and the velocity error $v_i^e=v_i-\dot{p}_r$ for $i\in \mathcal{V}$. Using the estimations $\hat{p}_i$ and $\hat{v}_i$ obtained from the distributed estimator \eqref{estimator}, we rewrite $p_i^e$ and $v_i^e$ as $p_i^e=p_i-\hat{p}_i-\delta_i+\bar{p}_i$ and $v_i^e=v_i-\hat{v}_i+\bar{v}_i$.
Since Theorem \ref{theorem:1} has shown that $\lim_{t\rightarrow\infty} \bar{p}_i(t)=0$ and $\lim_{t\rightarrow\infty} \bar{v}_i(t)=0$, $\forall i\in\mathcal{V}$, the coordinated tracking control objective \eqref{objective} can be transformed into the following simultaneous tracking objective:
\begin{equation*}
\lim_{t\rightarrow\infty}(p_i(t)-\hat{p}_i(t))=\delta_i, \lim_{t\rightarrow\infty}(v_i(t)-\hat{v}_i(t))=0, \forall i\in\mathcal{V}.
\end{equation*}
We next define $\bar{p}_i^e=p_i-\hat{p}_i-\delta_i$ and $\bar{v}_i^e=v_i-\hat{v}_i$ for $i\in\mathcal{V}$. It follows from \eqref{pos_kin}, \eqref{pos_dyn} and \eqref{estimator} that their dynamics satisfy
\begin{subequations}
\label{pos_err_sys}
\begin{align}
\dot{\bar{p}}_i^e=&\bar{v}_i^e+\sum^n_{j=1}m_{ij}\bar{p}_j,\\
\dot{\bar{v}}_i^e\!=&u_i-g\hat{e}_3-\hat{a}_i+\sum^n_{j=1}m_{ij}\bar{v}_j+\frac{T_i}{m_i}(R_i(Q_i)
\!\!-R_i(Q_i^c))\hat{e}_3,
\end{align}
\end{subequations}
where $u_i=\frac{T_i}{m_i}R_i(Q_i^c)\hat{e}_3$ is the command force with $Q_i^c$ being the command unit quaternion. Moreover, once the command force $u_i$ can be determined, in view of $\|R_i(Q_i^c)\hat{e}_3\|=1$, the applied thrust $T_i$ is derived as
\begin{equation}
\label{thrust}
T_i=m_i\|u_i\|,~~\forall i\in\mathcal{V}.
\end{equation}
Now that the control strategy is based on a hierarchical framework, the command unit quaternion $Q_i^c$, as the attitude tracking objective for each VTOL UAV, should be extracted from the command force $u_i$. Based on minimal rotation principle, a viable extraction algorithm is proposed in Lemma \ref{lemma:Qc} \citep{Abdessameud2009}.
\begin{lemma}
\label{lemma:Qc}
For $i\in\mathcal{V}$, if the command force $u_i=[u_i^x,u_i^y,u_i^z]^T$ satisfies the non-singular condition:
\begin{equation}
\label{u_con}
u_i\notin\mathcal{U}\triangleq\{u\in\mathbb{R}^3\mid u=[0,0,u^z]^T,u^z\leq0\},
\end{equation}
 the command unit quaternion $Q_i^c=[\sigma_i^c,(q_i^c)^T]^T$ is extracted as
\begin{align}
\sigma_i^c=\sqrt{\frac{1}{2}+\frac{g-u^z_i}{2\|u_i\|}},~
q_i^c=\frac{1}{2\|u_i\|\sigma_i^c}\left[\begin{array}{c}u_i^y\\-u_i^x\\0\end{array}\right].
\label{Q_c}
\end{align}
\end{lemma}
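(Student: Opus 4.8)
The statement is an explicit closed–form claim, so the plan is a direct verification in three stages: first, that \eqref{Q_c} is well defined under the non-singular condition \eqref{u_con}; second, that $Q_i^c$ is a genuine unit quaternion; third, that $Q_i^c$ reproduces the commanded thrust direction, i.e.\ that $(T_i/m_i)R_i(Q_i^c)\hat e_3=u_i$ with $T_i$ taken as in \eqref{thrust}, which is precisely what legitimizes using $Q_i^c$ as the attitude target. For the first stage, $u_i\notin\mathcal U$ in particular forbids $u_i=0$, hence $\|u_i\|>0$; moreover the quantity under the square root defining $\sigma_i^c$ is, up to the positive factor $2\|u_i\|$, of the form $\|u_i\|\pm u_i^z$, and since $\|u_i\|\ge|u_i^z|$ this is nonnegative and degenerates exactly when $u_i^x=u_i^y=0$ together with the corresponding sign of $u_i^z$ — that is, exactly on $\mathcal U$. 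Thus $u_i\notin\mathcal U$ gives $\sigma_i^c>0$, which both makes the division by $\sigma_i^c$ in the formula for $q_i^c$ legitimate and, incidentally, explains why $\mathcal U$ is singled out as the singular set.

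For the second stage, I would substitute \eqref{Q_c} into $(\sigma_i^c)^2+\|q_i^c\|^2$. Writing $\|q_i^c\|^2=\big((u_i^x)^2+(u_i^y)^2\big)/\big(4\|u_i\|^2(\sigma_i^c)^2\big)$ and expressing $(u_i^x)^2+(u_i^y)^2$ through $\|u_i\|$ and $u_i^z$, the numerator factors as a difference of squares that cancels a copy of $4\|u_i\|^2(\sigma_i^c)^2$, so $\|q_i^c\|^2=1-(\sigma_i^c)^2$; hence $Q_i^c\in\mathbb Q$, and then $R_i(Q_i^c)\in\mathrm{SO}(3)$ follows from the standard properties of the map \eqref{Q_to_R}.

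The third stage is the substantive computation. I would evaluate $R_i(Q_i^c)\hat e_3$ from \eqref{Q_to_R}; the crucial simplification is that the third component of $q_i^c$ vanishes, so in $R_i(Q_i^c)\hat e_3=\big((\sigma_i^c)^2-(q_i^c)^Tq_i^c\big)\hat e_3+2\big((q_i^c)^T\hat e_3\big)q_i^c-2\sigma_i^c(q_i^c)^\times\hat e_3$ the middle term drops out, and using $(q_i^c)^\times\hat e_3=[(q_i^c)_2,-(q_i^c)_1,0]^T$ and $(\sigma_i^c)^2-(q_i^c)^Tq_i^c=2(\sigma_i^c)^2-1$ the horizontal entries collapse to $u_i^x/\|u_i\|$ and $u_i^y/\|u_i\|$ while the vertical entry is then pinned down by the unit–norm identity already established. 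This gives $R_i(Q_i^c)\hat e_3=u_i/\|u_i\|$, and combining with $\|R_i(Q_i^c)\hat e_3\|=1$ and $T_i=m_i\|u_i\|$ yields $(T_i/m_i)R_i(Q_i^c)\hat e_3=u_i$. Finally, since $(q_i^c)^\times$ corresponds to a rotation whose axis lies in the plane orthogonal to $\hat e_3$, $R_i(Q_i^c)$ is the minimal–angle rotation carrying $\hat e_3$ onto $u_i/\|u_i\|$, which is exactly the minimal rotation principle invoked in the statement.

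The main obstacle I anticipate is purely algebraic bookkeeping in the third stage: carrying the signs correctly through $q^\times\hat e_3$ and through the diagonal term of \eqref{Q_to_R}, and checking that the vertical component of $R_i(Q_i^c)\hat e_3$ comes out consistent with the definition $u_i=(T_i/m_i)R_i(Q_i^c)\hat e_3$ of the command force. The first two stages are routine once the identity relating $(u_i^x)^2+(u_i^y)^2$, $\|u_i\|$, $u_i^z$ and $\sigma_i^c$ is isolated; the only delicate point there is to confirm that the radicand degenerates on $\mathcal U$ and nowhere else, which is what licenses the non-singular condition \eqref{u_con}.
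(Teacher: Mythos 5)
The paper never proves this lemma at all --- it is imported verbatim from \cite{Abdessameud2009} --- so there is no internal argument to compare against; your plan of a direct verification (well-posedness of \eqref{Q_c} off the set $\mathcal{U}$, unit norm of $Q_i^c$, and $R_i(Q_i^c)\hat{e}_3=u_i/\|u_i\|$ combined with $T_i=m_i\|u_i\|$ from \eqref{thrust}) is the natural and essentially only way to certify it, and your stage-3 computation is right: the vanishing third component of $q_i^c$ kills the $2q_i^c(q_i^c)^T\hat{e}_3$ term in \eqref{Q_to_R}, $(q_i^c)^\times\hat{e}_3=[(q_i^c)_2,-(q_i^c)_1,0]^T$ gives the horizontal components $u_i^x/\|u_i\|$ and $u_i^y/\|u_i\|$, and the axis being orthogonal to $\hat{e}_3$ is exactly the minimal-rotation property.

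The genuine issue is that what you verify is not \eqref{Q_c} as printed but the formula with $\sigma_i^c=\sqrt{\tfrac12+\tfrac{u_i^z}{2\|u_i\|}}$: in your first stage you describe the radicand, up to the factor $2\|u_i\|$, as being of the form $\|u_i\|\pm u_i^z$, which silently discards the $g$. With the printed $\sigma_i^c=\sqrt{\tfrac12+\tfrac{g-u_i^z}{2\|u_i\|}}$ every one of your three checks fails: $(\sigma_i^c)^2+\|q_i^c\|^2\neq1$ in general, the vertical component of $R_i(Q_i^c)\hat{e}_3$ comes out as $(g-u_i^z)/\|u_i\|$ rather than $u_i^z/\|u_i\|$ (at hover, where \eqref{command_force} gives $u_i\approx g\hat{e}_3$, the printed formula yields $\sigma_i^c\approx1/\sqrt2$, i.e.\ a quarter-turn command instead of the identity), and the zero set of the radicand is no longer $\mathcal{U}$, so \eqref{u_con} would not be the correct non-singularity condition. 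The stray $g$ is an artifact of the source's convention, where the extraction is applied to a translational command that excludes the gravity-compensation term, whereas here $u_i=\tfrac{T_i}{m_i}R_i(Q_i^c)\hat{e}_3$ already contains $g\hat{e}_3$. So your argument is sound for the intended (corrected) statement, but you must say so explicitly --- flag the spurious $g$ and prove the corrected formula --- rather than quietly rewriting the radicand; as literally stated, the lemma cannot be verified and your stage-2 unit-norm step is where the attempt would break down.
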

Next, define the attitude error $Q^e_i=[\sigma_i^e,(q_i^e)^T]^T=(Q^c_i)^{-1}\odot Q_i$ for $i\in\mathcal{V}$, where operator $\odot$ is the unit quaternion product. According to \cite{Zou2016}, $Q_i^e=[\pm 1,0,0,0]^T$ corresponds to the extract attitude tracking. The dynamics of $Q_i^e$ satisfies
\begin{equation}
\label{att_err_kin}
\dot{Q}_i^e=\frac{1}{2}G_i(Q_i^e)\omega_i^e,
\end{equation}
where $\omega_i^e=\omega_i-R_i(Q_i^e)^T\omega_i^c$ is the angular velocity error with $\omega_i^c$ being the command angular velocity. Please refer to \cite{Zou2016} for the derivations of $\omega_i^c$ and its derivative $\dot{\omega}_i^c$. In addition, it follows from \eqref{att_dyn} and $\dot{R}_i(Q_i^e)=R_i(Q_i^e)(\omega_i^e)^\times$ that the dynamics of $\omega_i^e$ satisfies
\begin{align}
\label{att_err_dyn}
J_i\dot{\omega}_i^e=&-\omega_i^\times J_i\omega_i+\tau_i+J_i[(\omega_i^e)^\times R_i(Q_i^e)^T\omega_i^c\notag\\
&-R_i(Q_i^e)^T\dot{\omega}_i^c].
\end{align}

Based on the above discussions, by introducing the distributed estimator \eqref{estimator}, the coordinated trajectory tracking problem for multiple VTOL UAV systems \eqref{pos_kin}-\eqref{att_dyn} can be transformed into the  simultaneous asymptotic stability problem for each error system \eqref{pos_err_sys}, \eqref{att_err_kin} and
\eqref{att_err_dyn}. Lemma \ref{lemma:trans} summarizes this point.
\begin{lemma}
\label{lemma:trans}
Consider the i-th error system \eqref{pos_err_sys}, \eqref{att_err_kin} and \eqref{att_err_dyn}.
If a command force $u_i$ and an applied torque $\tau_i$ can be developed such that $\lim_{t\rightarrow\infty}\bar{p}_i^e(t)=0$, $\lim_{t\rightarrow\infty}\bar{v}^e_i(t)=0$, $\lim_{t\rightarrow\infty}q_i^e(t)=0$ and $\lim_{t\rightarrow\infty}\omega_i^e(t)=0$, the coordinated
trajectory tracking of multiple VTOL UAV systems \eqref{pos_kin}-\eqref{att_dyn} is achieved in the sense of \eqref{objective}.
\end{lemma}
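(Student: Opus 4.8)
The plan is a two‑layer bookkeeping argument that splices the estimator convergence of Theorem~\ref{theorem:1} onto the hypothesized convergence of the transformed error variables, so that no new analysis is required. First I would recall the two sets of position/velocity errors introduced in Section~\ref{sec:3.2}: the \emph{true} tracking errors $p_i^e=p_i-p_r-\delta_i$ and $v_i^e=v_i-\dot{p}_r$ that appear in the objective \eqref{objective}, and the \emph{estimator-referenced} errors $\bar{p}_i^e=p_i-\hat{p}_i-\delta_i$ and $\bar{v}_i^e=v_i-\hat{v}_i$. Together with the estimator errors $\bar{p}_i=\hat{p}_i-p_r$ and $\bar{v}_i=\hat{v}_i-\dot{p}_r$ of Section~\ref{sec:3}, these obey the exact algebraic identities
\begin{equation*}
p_i^e=\bar{p}_i^e+\bar{p}_i,\qquad v_i^e=\bar{v}_i^e+\bar{v}_i,\qquad\forall i\in\mathcal{V}.
\end{equation*}

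Next I would invoke Theorem~\ref{theorem:1}: under Assumptions~\ref{assump:pr} and~\ref{assump:graph} with the parameter bounds \eqref{k_pv}--\eqref{beta}, the estimator \eqref{estimator} guarantees $\lim_{t\to\infty}\bar{p}_i(t)=0$ and $\lim_{t\to\infty}\bar{v}_i(t)=0$ for every $i\in\mathcal{V}$. Combining this with the lemma's standing hypotheses $\lim_{t\to\infty}\bar{p}_i^e(t)=0$ and $\lim_{t\to\infty}\bar{v}_i^e(t)=0$, the two displayed identities immediately give $\lim_{t\to\infty}p_i^e(t)=0$ and $\lim_{t\to\infty}v_i^e(t)=0$, i.e. $\lim_{t\to\infty}(p_i(t)-p_r(t))=\delta_i$ and $\lim_{t\to\infty}(v_i(t)-\dot{p}_r(t))=0$, which is precisely \eqref{objective}; since $\delta_{ij}=\delta_i-\delta_j$, this also yields the formation relation $\lim_{t\to\infty}(p_i(t)-p_j(t))=\delta_{ij}$. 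The two remaining hypotheses $\lim_{t\to\infty}q_i^e(t)=0$ and $\lim_{t\to\infty}\omega_i^e(t)=0$ do not enter this final deduction directly; I would remark that their role is upstream — by \cite{Zou2016}, $q_i^e\to0$ forces $Q_i\to Q_i^c$, so the coupling term $\frac{T_i}{m_i}(R_i(Q_i)-R_i(Q_i^c))\hat{e}_3$ in \eqref{pos_err_sys} vanishes in the limit and the position loop is actually driven by the designed command force $u_i$, which is what makes the hypothesis $\bar{v}_i^e\to0$ attainable.

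I do not anticipate a genuine analytical obstacle; the only point needing care is well-posedness of the hierarchical construction, which I would dispatch as a preliminary remark. The command quaternion $Q_i^c$ is extracted from $u_i$ through \eqref{Q_c}, valid only on the non-singular set of \eqref{u_con}, and $\omega_i^c$, $\dot{\omega}_i^c$ in \eqref{att_err_dyn} are defined only there, so one must check that the closed loop admits solutions on $[0,\infty)$ with $u_i\notin\mathcal{U}$ throughout. This is exactly why the estimator uses $\hat{a}_i=k_\gamma\tanh(\gamma_i)$ with $k_\gamma\geq\sup_{t\geq0}\ddot{p}_r(t)$: it keeps the vertical component of the command force bounded away from the half-line $\mathcal{U}$, ruling out finite escape time; granting this design feature, all signals in \eqref{pos_err_sys}, \eqref{att_err_kin} and \eqref{att_err_dyn} are defined for all $t\geq0$ and the limiting argument goes through without qualification. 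In short, the content of the lemma is that solving the four decoupled sub-limits is equivalent to achieving \eqref{objective}, and the proof is the elementary identity above combined with Theorem~\ref{theorem:1}.
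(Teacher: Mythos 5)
Your argument is correct and mirrors the paper's own reasoning in Section~\ref{sec:3.2}: the decompositions $p_i^e=\bar{p}_i^e+\bar{p}_i$ and $v_i^e=\bar{v}_i^e+\bar{v}_i$ combined with the estimator convergence $\bar{p}_i\to0$, $\bar{v}_i\to0$ from Theorem~\ref{theorem:1} are exactly how the paper transforms the hypothesized limits into the objective \eqref{objective}. Your additional remarks on the upstream role of $q_i^e\to0$, $\omega_i^e\to0$ and on the non-singularity of the command-attitude extraction are consistent with the paper's discussion and introduce no gap.
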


\subsection{Command force development}
\label{sec:3.3}

In this subsection, a command force $u_i$ for each VTOL UAV will be synthesized. The main difficulties here are that the command force $u_i$ should comply with the non-singular condition \eqref{u_con} and the desired position $p_r$ and its derivatives are not available in the command force $u_i$ and the subsequent applied torque $\tau_i$ due to limited communication.

To address the above dilemmas, we introduce the virtual position error $\tilde{p}_i^e=\bar{p}_i^e-\eta_i$ and the virtual velocity error $\tilde{v}_i^e=\bar{v}_i^e-\dot{\eta}_i$ for $i\in\mathcal{V}$, where $\eta_i$ is an auxiliary variable. It follows from \eqref{pos_err_sys} that the dynamics of $\tilde{p}_i^e$ and $\tilde{v}_i^e$ satisfy
\begin{subequations}
\label{pos_err_sys_1}
\begin{align}
\dot{\tilde{p}}_i^e=~&\tilde{v}_i^e+\hbar_1,\\
\dot{\tilde{v}}_i^e=~&u_i-\ddot{\eta}_i-g\hat{e}_3-\hat{a}_i+\hbar_2,
\end{align}
\end{subequations}
where $\hbar_1=\sum^n_{j=1}m_{ij}\bar{p}_j$ and $\hbar_2=\sum^n_{j=1}m_{ij}\bar{v}_j+\frac{T_i}{m_i}(R_i(Q_i)-R_i(Q_i^c))\hat{e}_3$.
\begin{lemma}
\label{lemma:trans_2}
Consider the $i$-th virtual position error system \eqref{pos_err_sys_1}. If a command force $u_i$ can be synthesized such
that $\lim_{t\rightarrow\infty}\tilde{p}_i^e(t)=0$, $\lim_{t\rightarrow\infty}\tilde{v}_i^e(t)=0$, $\lim_{t\rightarrow\infty}\eta_i(t)=0$ and $\lim_{t\rightarrow\infty}\dot{\eta}_i(t)=0$,
then $\lim_{t\rightarrow\infty}\bar{p}_i^e(t)=0$ and $\lim_{t\rightarrow\infty}\bar{v}_i^e(t)=0$ are achieved.
\end{lemma}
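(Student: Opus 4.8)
The goal of Lemma~\ref{lemma:trans_2} is entirely elementary given the definitions: the virtual errors were introduced precisely as $\tilde{p}_i^e=\bar{p}_i^e-\eta_i$ and $\tilde{v}_i^e=\bar{v}_i^e-\dot{\eta}_i$, so the plan is simply to invert these relations and pass to the limit. First I would write $\bar{p}_i^e=\tilde{p}_i^e+\eta_i$ and $\bar{v}_i^e=\tilde{v}_i^e+\dot{\eta}_i$. Then, invoking the hypothesis that $\lim_{t\to\infty}\tilde{p}_i^e(t)=0$ and $\lim_{t\to\infty}\eta_i(t)=0$, the limit of a sum of convergent functions is the sum of the limits, so $\lim_{t\to\infty}\bar{p}_i^e(t)=0$. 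Identically, $\lim_{t\to\infty}\tilde{v}_i^e(t)=0$ together with $\lim_{t\to\infty}\dot{\eta}_i(t)=0$ yields $\lim_{t\to\infty}\bar{v}_i^e(t)=0$. This is exactly the claimed conclusion.

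There is essentially no obstacle here; the lemma is a bookkeeping step that isolates the role of the auxiliary system $\eta_i$ so that the subsequent command-force design can target the virtual errors without worrying about the unavailable desired information. The only thing to be mildly careful about is that the statement presupposes $\eta_i$ and $\dot\eta_i$ are well-defined and that their vanishing will be delivered elsewhere (by choosing the auxiliary-system parameters appropriately, as announced in the introduction); within the proof of this lemma those are taken as hypotheses, so nothing further is required. I would keep the proof to two or three lines.

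\begin{proof}
By the definitions of the virtual position and velocity errors in Section~\ref{sec:3.3}, we have $\bar{p}_i^e=\tilde{p}_i^e+\eta_i$ and $\bar{v}_i^e=\tilde{v}_i^e+\dot{\eta}_i$ for $i\in\mathcal{V}$. Since $\lim_{t\rightarrow\infty}\tilde{p}_i^e(t)=0$ and $\lim_{t\rightarrow\infty}\eta_i(t)=0$ by assumption, the limit of the sum gives $\lim_{t\rightarrow\infty}\bar{p}_i^e(t)=\lim_{t\rightarrow\infty}\tilde{p}_i^e(t)+\lim_{t\rightarrow\infty}\eta_i(t)=0$. Likewise, $\lim_{t\rightarrow\infty}\tilde{v}_i^e(t)=0$ and $\lim_{t\rightarrow\infty}\dot{\eta}_i(t)=0$ yield $\lim_{t\rightarrow\infty}\bar{v}_i^e(t)=\lim_{t\rightarrow\infty}\tilde{v}_i^e(t)+\lim_{t\rightarrow\infty}\dot{\eta}_i(t)=0$. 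This completes the proof.
\end{proof}
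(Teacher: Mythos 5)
Your proof is correct and coincides with the paper's (implicit) reasoning: the paper states Lemma~\ref{lemma:trans_2} without a written proof, regarding it as immediate from the definitions $\tilde{p}_i^e=\bar{p}_i^e-\eta_i$ and $\tilde{v}_i^e=\bar{v}_i^e-\dot{\eta}_i$, which is exactly the sum-of-limits argument you give.
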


To guarantee the condition in Lemma \ref{lemma:trans_2}, for $i\in\mathcal{V}$, we propose the following command force:
\begin{equation}
\label{command_force}
u_i=g\hat{e}_3+\hat{a}_i-k_\eta(\tanh(\eta_i+\dot{\eta}_i)+\tanh(\dot{\eta}_i)),
\end{equation}
and introduce a dynamic system with respect to the auxiliary variable $\eta_i$:
\begin{align}
\label{aux_sys}
\ddot{\eta}_i=-k_\eta(\tanh(\eta_i+\dot{\eta}_i)+\tanh(\dot{\eta}_i))+l_p\tilde{p}_i^e+l_v\tilde{v}_i^e,
\end{align}
where $k_\eta$, $l_p$ and $l_v$ are positive control parameters. Substituting \eqref{command_force} and \eqref{aux_sys} into \eqref{pos_err_sys_1} yields
\begin{subequations}
\label{pos_err_sys_2}
\begin{align}
\dot{\tilde{p}}_i^e=~&\tilde{v}_i^e+\hbar_1,\\
\dot{\tilde{v}}^e=~&-l_p\tilde{p}_i^e-l_v\tilde{v}_i^e+\hbar_2.
\end{align}
\end{subequations}
A proper control parameter $k_\eta$ should be chosen such that the non-singular condition \eqref{u_con} is met. Specifically,
\begin{equation}
\label{par_con}
k_\eta<\frac{1}{2}(g-k_\gamma).
\end{equation}
In such a case, the third row of the command force $u_i$ satisfies
\begin{align*}
u_i^z=&g+\hat{a}_i^z-k_\eta(\tanh(\eta_i^z+\dot{\eta}_i^z)+\tanh(\dot{\eta}_i^z))\notag\\
\geq&g-k_\gamma-2k_\eta>0,
\end{align*}
where $\hat{a}_i^z=k_\gamma\tanh(\gamma_i^z)$ and the property that $|\tanh(\cdot)|<1$ have been applied.  To this end, $k_\eta$ satisfying \eqref{par_con} is sufficient to guarantee that the developed command force $u_i$ in \eqref{command_force} for each UAV strictly satisfies the non-singular condition \eqref{u_con}.

\begin{remark}
By defining $\hat{a}_i=k_\gamma\tanh(\gamma_i)$ in the distributed estimator \eqref{estimator} and introducing the auxiliary dynamics \eqref{aux_sys}, the developed command force $u_i$ for $i\in\mathcal{V}$ is equipped with a saturation property. Based on this property, the choice of the control parameter $k_\eta$  is independent on any estimator state.
\end{remark}

\begin{remark}
It follows from \eqref{thrust}, \eqref{command_force} and $\|\hat{a}_i\|\leq\sqrt{3}k_\gamma$ that the resulting applied thrust $T_i$  is bounded by
\begin{equation}
\label{T_bound}
T_i\leq m_i(g+2\sqrt{3}k_\eta+\sqrt{3}k_\gamma),~~\forall i\in\mathcal{V},
\end{equation}
which means that each $T_i$ is upper bounded by a constant associated with the individual mass $m_i$ and the specified   parameters $k_\eta$ and $k_\gamma$.
\end{remark}

\subsection{Applied torque development}
Define a sliding surface $r_i=l_qq_i^e+\omega_i^e$ for $i\in\mathcal{V}$,
where $l_q>0$. From \eqref{att_err_kin} and \eqref{att_err_dyn}, the dynamics of $r_i$ satisfies
\begin{align}
\label{ds_q}
\begin{split}
J_i\dot{r}_i=&\frac{l_q}{2}J_i(\sigma^e_iI_3+(q_i^e)^\times)\omega_i^e-\omega_i^\times J_i\omega_i+\tau_i\\
&+J_i[(\omega_i^e)^\times R_i(Q_i^e)^T\omega_i^c-R_i(Q_i^e)^T\dot{\omega}_i^c].
\end{split}
\end{align}
We propose an applied torque $\tau_i$ for each UAV as follows:
\begin{align}
\label{torque}
\begin{split}
\tau_i=&-k_qr_i-\frac{l_q}{2}J_i(\sigma^e_iI_3+(q_i^e)^\times)\omega_i^e+\omega_i^\times J_i\omega_i\\
&-J_i[(\omega_i^e)^\times R_i(Q_i^e)^T\omega_i^c-R_i(Q_i^e)^T\dot{\omega}_i^c],
\end{split}
\end{align}
where $k_q>0$. Substituting \eqref{torque} into \eqref{ds_q} yields
\begin{equation}
\label{ds_q1}
J_i\dot{r}_i=-k_qr_i.
\end{equation}
It follows from \eqref{torque} that, the command angular
velocity $\omega_i^c$ and its derivative $\dot{\omega}_i^c$ are necessary to determine each
applied torque $\tau_i$. According to \cite{Zou2016}, $\omega_i^c$ and $\dot{\omega}_i^c$ are the functions
of the derivatives of the command force $u_i$. Their expressions are presented as follows:
\begin{align*}
\begin{split}
\dot{u}_i=&k_\gamma\bar{\Gamma}_i\dot{\gamma}_i-k_\eta [D_i\dot{\eta}_i+(D_i+S_i)\ddot{\eta}_i],\\
\ddot{u}_i=&-2k_\gamma\bar{\Gamma}_i\Gamma_i\dot{\gamma}_i+k_\gamma\bar{\Gamma}_i\ddot{\gamma}_i-k_\eta[\bar{D}_i\dot{\eta}_i+(\bar{D}_i+{D}_i+\bar{S}_i)\ddot{\eta}_i\notag\\
&+(D_i+S_i)\eta^{(3)}_i],
\end{split}
\end{align*}
where $\Gamma_i$ and $\bar{\Gamma}_i$ have been specified below \eqref{estimator},
$D_i=\mathrm{diag}\{\epsilon_i^x,\epsilon_i^y,\epsilon_i^z\}$ with $\epsilon_i^j=1-\tanh^2(\eta_i^j+\dot{\eta}_i^j)$, $S_i=\mathrm{diag}\{\nu_i^x,\nu_i^y,\nu_i^z\}$ with $\nu_i^j=1-\tanh^2(\dot{\eta}_i^j)$, $\bar{D}_i=\{\bar{\epsilon}_i^x,\bar{\epsilon}_i^y,\bar{\epsilon}_i^z\}$ with $\bar{\epsilon}_i^j=-2\tanh(\eta_i^j+\dot{\eta}_i^j)(1-\tanh^2(\eta_i^j+\dot{\eta}_i^j))(\eta_i^j+\dot{\eta}_i^j)$ and
$\bar{S}_i=\{\bar{\nu}_i^x,\bar{\nu}_i^y,\bar{\nu}_i^z\}$ with
$\bar{\nu}_i^j=-2\tanh(\dot{\eta}_i^j)(1-\tanh^2(\dot{\eta}_i^j))\dot{\eta}_i^j$, for $j=x,y,z$, and
\begin{align*}
\eta_i^{(3)}=&-k_\eta[D_i\dot{\eta}_i+(D_i+S_i)\ddot{\eta_i}]+l_p\dot{\tilde{p}}^e_i+l_v\dot{\tilde{v}}^e_i.
\end{align*}
From the above derivations, it it trivial to see that the desired information is not used in the developed applied torque $\tau_i$ for the UAV without accessibility to the leader.

\subsection{Stability analysis}
Theorem \ref{theorem:2} summarizes the final stability result associated with the coordinated trajectory tracking of multiple VTOL UAV systems \eqref{pos_kin}-\eqref{att_dyn} controlled by the developed applied thrust and torque.

\begin{theorem}
\label{theorem:2}
Consider $n$ follower VTOL UAV systems \eqref{pos_kin}-\eqref{att_dyn} with Assumptions \ref{assump:pr} and \ref{assump:graph}. The synthesized command force $u_i$ in \eqref{command_force} and applied
torque $\tau_i$ in \eqref{torque} guarantee the coordinated trajectory tracking of multiple
VTOL UAVs in the sense of \eqref{objective}.
\end{theorem}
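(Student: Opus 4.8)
The plan is to build on the estimator convergence of Theorem~\ref{theorem:1} and then invoke Lemmas~\ref{lemma:trans} and~\ref{lemma:trans_2}, which together reduce the coordinated objective \eqref{objective} to showing, for every $i\in\mathcal{V}$, that $q_i^e,\omega_i^e\to0$ and $\tilde p_i^e,\tilde v_i^e,\eta_i,\dot\eta_i\to0$. The structural observation I would exploit is that the closed loop is a cascade: the estimator \eqref{estimator} is self-contained, the attitude sliding variable obeys the autonomous linear equation \eqref{ds_q1}, and only the virtual position error system \eqref{pos_err_sys_2} is driven by both. Accordingly I would treat the three blocks in the order estimator $\to$ attitude $\to$ position, noting at the outset that \eqref{par_con} already guarantees $u_i\notin\mathcal{U}$ for all $t$, so that $Q_i^c$ and hence $\omega_i^c$, $\dot\omega_i^c$ and the whole construction are well defined throughout.

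For the attitude, \eqref{ds_q1} with $J_i$ positive definite gives, via $V_r=\tfrac12 r_i^{T}J_i r_i$, that $\dot V_r=-k_q\|r_i\|^2$, so $r_i(t)\to0$ exponentially (in particular $r_i\in L_1$). Substituting $\omega_i^e=r_i-l_q q_i^e$ into \eqref{att_err_kin} yields a reduced attitude kinematics forced by the vanishing signal $r_i$. Using $V_q=2(1-\sigma_i^e)=(1-\sigma_i^e)^2+\|q_i^e\|^2$ and $\dot\sigma_i^e=-\tfrac12(q_i^e)^{T}\omega_i^e$, one finds $\dot V_q=(q_i^e)^{T}r_i-l_q\|q_i^e\|^2$; since $\|q_i^e\|^2=V_q(1-V_q/4)$, as long as $\sigma_i^e$ stays bounded away from $-1$ (equivalently $V_q$ away from $4$) this gives $\dot V_q\le\sqrt{V_q}\,\|r_i\|-c\,V_q$ for some $c>0$, and a comparison argument on $\sqrt{V_q}$ then forces $V_q\to0$, hence $q_i^e\to0$, $\sigma_i^e\to1$, and $\omega_i^e=r_i-l_q q_i^e\to0$. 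The only essential restriction here is the familiar topological one for unit-quaternion attitude control: the initial attitude error must keep $\sigma_i^e$ off $-1$ along the transient.

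For the position, once $q_i^e\to0$ we have $R_i(Q_i^e)\to I_3$, so the attitude-mismatch term in $\hbar_2$ equals $\tfrac{T_i}{m_i}R_i(Q_i^c)\bigl(R_i(Q_i^e)-I_3\bigr)\hat e_3\to0$, using that $T_i/m_i=\|u_i\|$ is bounded by \eqref{T_bound}; with $\sum_{j}m_{ij}\bar v_j\to0$ and $\hbar_1=\sum_{j}m_{ij}\bar p_j\to0$ from Theorem~\ref{theorem:1}, both forcing terms of \eqref{pos_err_sys_2} are bounded and vanish. As \eqref{pos_err_sys_2} is linear and, per coordinate, a second-order system with characteristic polynomial $s^2+l_v s+l_p$ (Hurwitz since $l_p,l_v>0$), an input-to-state/comparison argument gives $\tilde p_i^e,\tilde v_i^e\to0$. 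Feeding this into the auxiliary dynamics \eqref{aux_sys}, whose unforced part $\ddot\eta_i=-k_\eta(\tanh(\eta_i+\dot\eta_i)+\tanh(\dot\eta_i))$ is componentwise globally asymptotically stable via $\tfrac12(\dot\eta_i^k)^2+k_\eta\ln\cosh(\eta_i^k)$ together with LaSalle, the same converging-input reasoning yields $\eta_i,\dot\eta_i\to0$. Lemma~\ref{lemma:trans_2} then gives $\bar p_i^e,\bar v_i^e\to0$, and Lemma~\ref{lemma:trans} finally delivers \eqref{objective}.

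The step I expect to be the main obstacle is the boundedness bookkeeping needed to make this cascade rigorous. The command angular velocity and its derivative, and hence the torque \eqref{torque}, are assembled from $\dot u_i,\ddot u_i$, which contain $\dot\eta_i,\ddot\eta_i,\eta_i^{(3)}$ and $\dot\gamma_i,\ddot\gamma_i$; one must verify that these auxiliary signals stay bounded on the maximal interval of existence — using, e.g., $\dot{\hat a}_i=\dot{\bar a}_i+p_r^{(3)}$ bounded together with $\bar\Gamma_i$ positive definite (cf. the earlier Remark) — before any ``input converges $\Rightarrow$ state converges'' step is legitimate. Combined with the $\sigma_i^e\neq-1$ requirement and with the fact that the nested-saturation $\eta$-subsystem, although globally asymptotically stable when unforced, is only locally (not globally) ISS in $l_p\tilde p_i^e+l_v\tilde v_i^e$, this is what ultimately pins down the region of attraction of the closed loop. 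The remaining items — the Hurwitz check, the global asymptotic stability of the $\eta$-subsystem, and the comparison-principle estimates — are routine given the lemmas already in place.
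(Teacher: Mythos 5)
Your overall architecture is the same as the paper's: invoke Theorem~\ref{theorem:1}, reduce via Lemmas~\ref{lemma:trans} and~\ref{lemma:trans_2} to the three conditions on $(q_i^e,\omega_i^e)$, $(\tilde p_i^e,\tilde v_i^e)$ and $(\eta_i,\dot\eta_i)$, and prove them in cascade; your attitude argument (exponential decay of $r_i$ from \eqref{ds_q1}, then a comparison on $1-\sigma_i^e$) and your position argument (vanishing $\hbar_1,\hbar_2$ using $(R_i(Q_i^e)-I_3)\hat e_3\to0$, the thrust bound \eqref{T_bound} and Theorem~\ref{theorem:1}, then ISS of the Hurwitz linear part of \eqref{pos_err_sys_2}) coincide with the paper's Propositions~\ref{pro:1} and~\ref{pro:2} almost line by line. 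The $\sigma_i^e\neq-1$ caveat you raise is also implicitly present in the paper's own comparison system $\dot y=\tfrac{l_q}{4}(-2y+y^2)$ on $[0,2]$, so it is not a discrepancy between you and the paper.

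The genuine gap is the last block. You correctly observe that the auxiliary subsystem \eqref{aux_sys} is only locally ISS with respect to $\varepsilon_i=l_p\tilde p_i^e+l_v\tilde v_i^e$ (bounded, saturated feedback cannot dominate a large input), but you then rest the conclusion on ``the same converging-input reasoning,'' which is exactly the step that does not follow: global asymptotic stability of the unforced system plus a converging input does not, by itself, give state convergence, and you leave this as an acknowledged obstacle rather than closing it. The paper closes it in Proposition~\ref{pro:3} with a specific three-step argument you would need to reproduce: (i) bound $\|\varepsilon_i(t)\|$ by a $\mathcal{KL}$ envelope $\rho(\|\varepsilon_i(0)\|,t)$; (ii) use the Lyapunov function $L_i^\eta=\int_0^{\eta_i+\dot\eta_i}\tanh(\chi)^T\mathrm{d}\chi+\int_0^{\dot\eta_i}\tanh(\chi)^T\mathrm{d}\chi+\tfrac{1}{2k_\eta}\|\dot\eta_i\|^2$ to get $\dot L_i^\eta\le 2\sqrt{L_i^\eta}\,\rho(\|\varepsilon_i(0)\|,0)$, hence at most linear growth of $\sqrt{L_i^\eta}$ and no finite-time escape; (iii) split time at a suitable $t_1$ after which the decayed perturbation satisfies a smallness condition relative to the trajectory bound $\Delta_\eta$, so that $\dot L_i^\eta\le-c_1(\chi_\eta)^2\|\bar\eta_i\|^2+c_2\|\bar\eta_i\|\rho(\|\varepsilon_i(t_1)\|,t-t_1)$ is negative outside a set $\mathcal{Z}_i$ whose radius shrinks to zero, which yields $\eta_i,\dot\eta_i\to0$ without any global ISS property and without a region-of-attraction restriction from this loop. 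Your closing suggestion that the result only holds on a region pinned down by this local-ISS issue therefore understates the theorem as proved. Your separate worry about boundedness bookkeeping for $\omega_i^c,\dot\omega_i^c$ in the torque is largely moot in this scheme: after substituting \eqref{torque}, the attitude error dynamics \eqref{ds_q1} is autonomous, and the non-singularity condition \eqref{par_con} together with the saturated $\gamma_i$- and $\eta_i$-dynamics keeps the extraction and the listed expressions for $\dot u_i,\ddot u_i$ well defined for all time.
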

\begin{proof}
Theorem \ref{theorem:1} has shown that, for $i\in\mathcal{V}$, the distributed estimator developed in \eqref{estimator} enables the estimation errors $\bar{p}_i$ and $\bar{v}_i$  to converge to zero asymptotically.
Based on this, it follows from Lemmas \ref{lemma:trans} and \ref{lemma:trans_2} that the coordinated trajectory tracking objective is achieved, if the following results are guaranteed by the synthesized command force $u_i$ and applied torque $\tau_i$:\\
~~~~ Th2.i) $\lim_{t\rightarrow\infty}q_i^e(t)=0$ and $\lim_{t\rightarrow\infty}\omega_i^e(t)=0$, $\forall i\in\mathcal{V}$,\\
~~~~ Th2.ii) $\lim_{t\rightarrow\infty}\tilde{p}_i^e(t)=0$ and $\lim_{t\rightarrow\infty}\tilde{v}_i^e(t)=0$, $\forall i\in\mathcal{V}$,\\
~~~~ Th2.iii) $\lim_{t\rightarrow\infty}\eta_i(t)=0$ and $\lim_{t\rightarrow\infty}\dot{\eta}_i(t)=0$, $\forall i\in\mathcal{V}$.\\
They will be proven in Propositions \ref{pro:1}-\ref{pro:3}, respectively.

\begin{proposition}
\label{pro:1}
Consider the attitude error system \eqref{att_err_kin} and \eqref{att_err_dyn}.  The developed applied torque $\tau_i$ in \eqref{torque} guarantees that $\lim_{t\rightarrow\infty}q_i^e(t)=0$ and $\lim_{t\rightarrow\infty}\omega_i^e(t)=0$, $\forall i\in\mathcal{V}$.
\end{proposition}
\begin{proof}
It follows from \eqref{ds_q1} that, for $i\in\mathcal{V}$, the developed applied torque $\tau_i$ enables the sliding surface $r_i$ to converge to zero asymptotically. Then, assign a non-negative function $y_i=\frac{1}{2}[\|q_i^e\|^2+(1-\sigma_i^e)^2]=1-\sigma_i^e\leq2$ for $i\in\mathcal{V}$. With the definition of $r_i$, the derivative of $y_i$ along \eqref{att_err_kin} satisfies
\begin{align*}
\dot{y}_i=&\frac{1}{2}(q_i^e)^T\omega_i^e=-\frac{l_q}{2}\|q_i^e\|^2+\frac{1}{2}(q_i^e)^Tr_i\notag\\
\leq&-\frac{l_q}{4}\|q_i^e\|^2+\frac{1}{4}\|r_i\|^2\notag\\
=&\frac{l_q}{4}(-2y_i+y_i^2)+\frac{1}{4}\|r_i\|^2.
\end{align*}
It can be shown that system $\dot{y}=\frac{l_q}{4}(-2y+y^2)$ with respect to $y\in[0,2]$ is asymptotically stable. For system $\dot{y}=\frac{l_q}{4}(-2y+y^2)+\frac{1}{4}\|r_i\|^2$ with respect to $y\in[0,2]$, by using
input-to-state stability theory \citep{Khalil2002}, it follows that $\lim_{t\rightarrow\infty}y(t)=0$ given the fact $\lim_{t\rightarrow\infty}r_i(t)=0$.  According to Comparison Principal \citep{Khalil2002}, $\lim_{t\rightarrow\infty}y_i(t)=0$ is obtained, that is, $\lim_{t\rightarrow\infty}q_i^e(t)\!=\!0$, which, together with $\lim_{t\rightarrow\infty}r_i(t)=0$, further implies that $\lim_{t\rightarrow\infty}\omega_i^e(t)\!=\!0$, $\forall i\in\mathcal{V}$.
\end{proof}

\begin{proposition}
\label{pro:2}
Consider the virtual position error system \eqref{pos_err_sys_1} with Assumptions \ref{assump:pr} and \ref{assump:graph}. If  $\lim_{t\rightarrow\infty}q_i^e(t)=0$, $\lim_{t\rightarrow\infty}\bar{p}_i(t)=0$ and $\lim_{t\rightarrow\infty}\bar{v}_i(t)=0$ are achieved, the developed
command force $u_i$ in \eqref{command_force} guarantees that $\lim_{t\rightarrow\infty}\tilde{p}_i^e(t)\!=\!0$ and $\lim_{t\rightarrow\infty}\tilde{v}_i^e(t)=0$, $\forall i\in\mathcal{V}$.
\end{proposition}
\begin{proof}
It follows from \eqref{Q_to_R} that $(R_i(Q^e_i)-I_3)\hat{e}_3=\varphi_i^\times q_i^e$, where $\varphi_i=[-q_i^{ey},q_i^{ex},\sigma_i^e]^T$. In terms of $\|R_i(Q_i^c)\|=1$, $\|Q^e_i\|=1$ and \eqref{T_bound}, it follows from $\lim_{t\rightarrow\infty}q_i^e(t)=0$ that each $\frac{T_i}{m_i}(R_i(Q_i)-R_i(Q_i^c))\hat{e}_3=\frac{T_i}{m_i}R_i(Q_i^c)(R_i(Q^e_i)-I_3)\hat{e}_3$ converges to zero asymptotically. This, together with $\lim_{t\rightarrow\infty}\bar{p}_i(t)=0$ and $\lim_{t\rightarrow\infty}\bar{v}_i(t)=0$, guarantees that the perturbation items $\hbar_1$ and $\hbar_2$ in the virtual position error system \eqref{pos_err_sys_2} converge to zero asymptotically. Furthermore, it can be shown that system
\begin{subequations}
\begin{align*}
\dot{\tilde{p}}_i^e=~&\tilde{v}_i^e,\\
\dot{\tilde{v}}^e=~&-l_p\tilde{p}_i^e-l_v\tilde{v}_i^e,
\end{align*}
\end{subequations}
is asymptotically stable. Thus, it follows from input-to-state stability theory \citep{Khalil2002} that $\lim_{t\rightarrow\infty}\tilde{p}_i^e(t)=0$ and $\lim_{t\rightarrow\infty}\tilde{v}_i^e(t)=0$, $\forall i\in\mathcal{V}$ given the fact that $\lim_{t\rightarrow\infty}\hbar(t)=0$, where $\hbar=[\hbar_1^T,\hbar_2^T]^T$.
\end{proof}

\begin{proposition}
\label{pro:3}
Consider the auxiliary system \eqref{aux_sys}. If
$\lim_{t\rightarrow\infty}\tilde{p}_i^e(t)=0$ and $\lim_{t\rightarrow\infty}\tilde{v}_i^e(t)=0$ are achieved, then $\lim_{t\rightarrow\infty}\eta_i(t)=0$ and $\lim_{t\rightarrow\infty}\dot{\eta}_i(t)=0$, $\forall i\!\in\!\mathcal{V}$.
\end{proposition}
\begin{proof}
Denote $\varepsilon_i=l_p\tilde{p}_i^e+l_v\tilde{v}^e_i$ for $i\in\mathcal{V}$. It follows from $\lim_{t\rightarrow\infty}\tilde{p}_i^e(t)=0$ and $\lim_{t\rightarrow\infty}\tilde{v}_i^e(t)\!=\!0$ that  $\lim_{t\rightarrow\infty}\varepsilon_i(t)=0$. To this end, there exists a $\mathcal{KL}$-class function $\rho(\|\varepsilon_i(0)\|,t)$ such that $\|\varepsilon_i(t)\|\leq\rho(\|\varepsilon_i(0)\|,t)$. For $i\in\mathcal{V}$, the following Lyapunov function is proposed:
\begin{equation*}
L_i^\eta=\int_0^{\eta_i+\dot{\eta}_i}\tanh(\chi)^T\mathrm{d}\chi
+\int_0^{\dot{\eta}_i}\tanh(\chi)^T\mathrm{d}\chi+\frac{1}{2k_\eta}\|\dot{\eta}_i\|^2.
\end{equation*}
It is trivial to verify that
\begin{equation}
L_i^\eta\geq\frac{1}{2}(\|\tanh(\eta_i+\dot{\eta}_i)\|^2+\|\tanh(\dot{\eta}_i)\|^2+\frac{1}{k_\eta}\|\dot{\eta}_i\|^2).
\end{equation}
The derivative of $L_i^\eta$ along \eqref{aux_sys} satisfies
\begin{align}
\dot{L}^\eta_i=&\tanh(\eta_i+\dot{\eta}_i)^T\dot{\eta}_i+[\tanh(\eta_i+\dot{\eta}_i)+\tanh(\dot{\eta}_i)\notag\\
&+\frac{1}{k_\eta}\dot{\eta}_i]^T[-k_\eta(\tanh(\eta_i+\dot{\eta}_i)+\tanh(\dot{\eta}_i))+\varepsilon_i]\notag\\
=&-k_\eta\|\tanh(\eta_i+\dot{\eta}_i)+\tanh(\dot{\eta}_i)\|^2-\dot{\eta}_i^T\tanh(\dot{\eta}_i)\notag\\
&+[\tanh(\eta_i+\dot{\eta}_i)+\tanh(\dot{\eta}_i)+\frac{1}{k_\eta}\dot{\eta}_i]^T\varepsilon_i\label{dL_eta_1}\\
\leq&-k_\eta\|\tanh(\eta_i+\dot{\eta}_i)+\tanh(\dot{\eta}_i)\|^2-\!\dot{\eta}_i^T\tanh(\dot{\eta}_i)\notag\\
&+2\sqrt{L^\eta_i}\|\varepsilon_i\|\notag\\
\leq&2\sqrt{L^\eta_i}\rho(\|\varepsilon_i(0)\|,0).\label{dL_eta}
\end{align}
Integrating both sides of \eqref{dL_eta}, we obtain that
\begin{equation}
\sqrt{L^\eta_i(t)}-\sqrt{L^\eta_i(0)}\leq\rho(\|\varepsilon_i(0)\|,0)t, ~~\forall t \geq0,
\end{equation}
 which indicates that $L_i^\eta$ cannot escape to infinity in finite time. In addition, it follows from \eqref{dL_eta_1} that $\dot{L}_i^\eta$ satisfies
\begin{align}
\dot{L}_i^\eta\leq&-\tanh(\bar{\eta}_i)^T\Lambda\tanh(\bar{\eta}_i)\notag\\
&+\left[\|\eta_i+\dot{\eta}_i\|+(1+\frac{1}{k_\eta})\|\dot{\eta}_i\|\right]\rho(\|\varepsilon_i(0)\|,t)\notag\\
\leq&-\min\{1,k_\eta\}\|D\tanh(\bar{\eta}_i)\|^2+c_2\|\bar{\eta}_i\|\rho(\|\varepsilon_i(0)\|,t)\notag\\
\leq&-c_1\|\tanh(\bar{\eta}_i)\|^2+c_2\|\bar{\eta}_i\|\rho(\|\varepsilon_i(0)\|,t),
\end{align}
where $\bar{\eta}_i=[\eta_i^T+\dot{\eta}_i^T,\dot{\eta}_i^T]^T$, $\Lambda=I_{3}\otimes\left[\begin{array}{cc}k_\eta&k_\eta\\k_\eta&k_\eta+1\end{array}\right]$,
$D=I_{3}\otimes\left[\begin{array}{cc}1&1\\1&2\end{array}\right]$, $c_1=\frac{3-\sqrt{5}}{2}\min\{1,k_\eta\}$ and $c_2=1+\frac{1}{k_\eta}$. Since $L^\eta_i$ cannot escape in finite time, there exist $t_1$ and $\Delta_\eta$ such that $\|\bar{\eta}_i(t)\|\leq\Delta_\eta$ for $ t\in[0,t_1]$ and $\rho(\|\varepsilon_i(t_1)\|,t-t_1)\leq\frac{c_1(\chi_\eta)^2\Delta_\eta}{c_2}$ for $t\in[t_1,\infty)$, where $\chi_\eta<\frac{\tanh(\Delta_\eta)}{\Delta_\eta}$ is a constant. In particular, $\|\tanh(\bar{\eta}_i)\|\leq\chi_\eta\|\bar{\eta}_i\|$. Thus, for $t\geq t_1$, $\dot{L}^\eta_i$ satisfies
\begin{equation}
\dot{L}_i^\eta\leq-c_1(\chi_\eta)^2\|\bar{\eta}_i\|^2+c_2\|\bar{\eta}_i\|\rho(\|\varepsilon_i(t_1)\|,t-t_1).
\end{equation}
This implies that $\dot{L}^\eta_i$ is negative outside the set
\begin{equation*}
\mathcal{Z}_i=\left\{\bar{\eta}_i\in\mathbb{R}^6\mid\|\bar{\eta}_i\|\leq
\frac{c_2}{c_1(\chi_\eta)^2}\rho(\|\varepsilon_i(t_1)\|,t-t_1)\right\}.
\end{equation*}
Thus, $\bar{\eta}_i$ is bounded and ultimately converges to the set $\mathcal{Z}_i$. In view of $\lim_{t\rightarrow\infty}\rho(\|\varepsilon_i(t_1)\|,t-t_1)=0$, it follows that $\lim_{t\rightarrow\infty}\bar{\eta}_i(t)=0$, which implies that $\lim_{t\rightarrow\infty}\eta_i(t)\!=\!0$ and $\lim_{t\rightarrow\infty}\dot{\eta}_i(t)\!=\!0$, $\forall i\in\mathcal{V}$.
\end{proof}
Since Th2.i)-Th2.iii) have been proven, it can be concluded that the coordinated trajectory tracking of multiple VTOL UAVs is achieved in the sense of \eqref{objective}.
\end{proof}

\section{Simulations}
\label{sec:4}
\begin{figure}[t]
  \centering
  \includegraphics[width=2in]{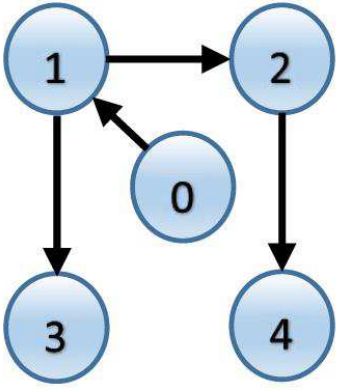}
  \caption{Leader-follower graph $\mathcal{G}_{n+1}$.}
  \label{fig:graph}
\end{figure}
In this section, simulations are performed to verify the proposed distributed control approach on a formation of four VTOL UAVs described by \eqref{pos_kin}-\eqref{att_dyn}. The inertial parameters are assumed to be identical: $m_i=0.85(\mathrm{kg})$ and $J_i=\mathrm{diag}\{4.856,4.856,8.801\}\times10^{-2}(\mathrm{kgm^2})$, $i=1,2,3,4$.
The leader-follower graph $\mathcal{G}_{n+1}$ is illustrated in Fig. \ref{fig:graph}, where each arrow denotes the corresponding information flow. Furthermore, define $d_{ij}=1$ if follower $j$ is accessible to follower $i$, and $d_{ij}=0$, otherwise, for $i,j=0,1,\cdots,4$. The desired trajectory is described as $p_r(t)=[5\cos(0.2t),5\sin(0.2t),t]^T(\mathrm{m})$, and the desired position offsets of the followers relative to the leader are $\delta_1=[2,2,0]^T(\mathrm{m})$, $\delta_2=[2,-2,0]^T(\mathrm{m})$, $\delta_3=[-2,-2,0]^T(\mathrm{m})$ and $\delta_4 = [-2,2,0]^T(\mathrm{m})$, respectively. This indicates that the desired formation pattern is a square. The distributed estimator states of each follower UAV are initialized as zeros. The follower UAVs are initially situated at $p_1(0)=[5,3,-1]T(\mathrm{m})$, $p_2(0)=[9,-4,1]^T(\mathrm{m})$, $p_3(0)=[4,-2,-3]^T(\mathrm{m})$ and $p_4(0)=[-1,4,-2]^T(\mathrm{m})$ and their initial attitudes are $Q_i(0)=[1,0,0,0]^T$, $i=1,2,3,4$. The estimator and control parameters are chosen as
follows: $k_\gamma=0.5$, $k_p=k_v=8$ based on \eqref{k_pv}, $l_a=12$ based on \eqref{l_a}, $k_a=4$ based on \eqref{beta}, $l_p=l_v=k_\eta=4$ and $l_q=k_q=16$. The simulation results are illustrated in Figs. \ref{fig:3D}-\ref{fig:vel_err}.

\begin{figure}[t]
  \centering
  \includegraphics[width=3in]{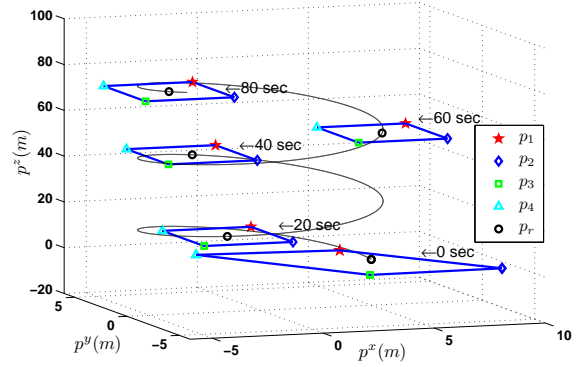}
  \caption{Snapshots of coordinated trajectory tracking of VTOL UAVs.}
  \label{fig:3D}
\end{figure}

\begin{figure}[t]
  \centering
  \includegraphics[width=3in]{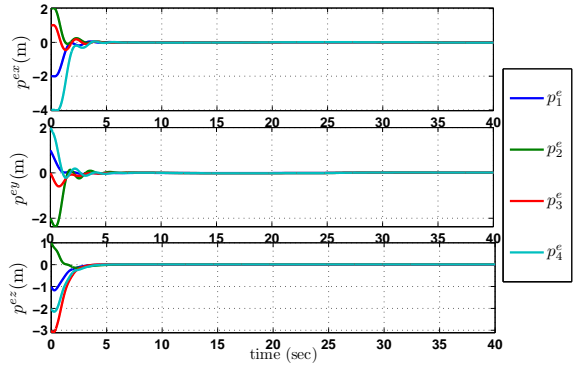}
  \caption{Position error of follower VTOL UAVs.}
  \label{fig:pos_err}
\end{figure}

\begin{figure}[t]
  \centering
  \includegraphics[width=3in]{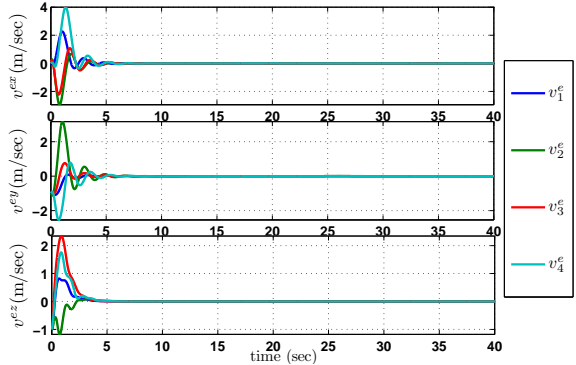}
  \caption{Velocity error of follower VTOL UAVs.}
  \label{fig:vel_err}
\end{figure}

Fig. \ref{fig:3D} exhibits the evolution of the VTOL UAV formation with respect to the leader in the three-dimensional space, where the formation is depicted every twenty seconds. It can be seen that the follower UAVs
reach the prescribed square pattern while tracking the leader. Figs. \ref{fig:pos_err} and \ref{fig:vel_err} describe the position and velocity errors of the follower UAVs with respect to the leader. It can be observed that each error component converges to zero asymptotically.
Consequently, the simulation results validate that the proposed distributed control approach effectively guarantees the coordinated trajectory tracking of multiple VTOL UAVs in the sense of \eqref{objective}.

\section{Conclusion}
\label{sec:5}
A distributed control  strategy is proposed in this paper to achieve the coordinated trajectory tracking of multiple VTOL UAVs with local information exchange. The connectivity of the network graph is weak in the sense that we only require the graph to contain a directed spanning tree. A novel distributed estimator is firstly designed for each VTOL UAV to obtain the leader's desired information asymptotically. Then, under the hierarchical framework, a command force and an applied torque are exploited for each VTOL UAV to fulfill the accurate tracking to the desired information asymptotically. In addition, an auxiliary system is introduced in the control development to avoid the non-singular command attitude extraction and the use of the unavailable desired information. Simulations are carried out to validate the theoretical results.

%%%%%%%%%%%%%% begin Bibliography %%%%%%%%%%%%%%%%%

\bibliographystyle{IEEEtran}
\bibliography{refs}

\end{document}